\setlist[enumerate,1]{label={\roman*)}}
\newcommand{\X}{\mathfrak{X}}
\newcommand{\sode}{\Gamma}
\DeclarePairedDelimiter{\gen}{\langle}{\rangle}
\newcommand{\Cinfty}{\mathscr{C}^\infty}
\newcommand{\parent}[1]{
\left( #1 \right)
}
\newcommand{\pder}[2]{ \frac{\partial #1}{\partial #2} }
\theoremstyle{plain}
\newtheorem{theorem}{Theorem}
\newtheorem*{theorem*}{Theorem}
\newtheorem{lemma}[theorem]{Lemma}
\newtheorem*{lemma*}{Lemma}
\newtheorem{proposition}[theorem]{Proposition}
\newtheorem*{proposition*}{Proposition}
\newtheorem{corollary}[theorem]{Corollary}  
\newtheorem*{corollary*}{Corollary}
\theoremstyle{definition}
\newtheorem{definition}{Definition}
\newtheorem*{definition*}{Definition}
\newtheorem{example}{Example}
\newtheorem*{example*}{Example}
\newtheorem*{remarkx*}{Remark}
\newenvironment{remark}
  {\pushQED{\qed}\remarkx}
  {\popQED\endremarkx}
\newtheorem*{conjecture*}{Conjecture}
\newtheorem*{problem*}{Problem}
\newcommand*{\NN}{\mathbb{N}}
\newcommand*{\RR}{\mathbb{R}}
\let\R\RR
\newcommand{\mommap}{\mathbf{J}}
\newcommand{\Fext}{F^{\mathrm{ext}}}
\newcommand{\Ftext}{\tilde{F}^{\mathrm{ext}}}
\newcommand{\Ffr}{F^{\mathrm{fr}}}
\newcommand{\Ftfr}{\tilde{F}^{\mathrm{fr}}}
\newcommand{\fd}{f_d}
\newcommand{\fdp}{f^{+}_d}
\newcommand{\fdm}{f^{-}_d}
\newcommand{\fdpm}{f^{\pm}_d}
\newcommand{\ffrd}{f^{\mathrm{fr}}_d}
\newcommand{\ffrdp}{f^{\mathrm{fr,+}}_d}
\newcommand{\ffrdm}{f^{\mathrm{fr,-}}_d}
\newcommand{\ffrdpm}{f^{\mathrm{fr,\pm}}_d}
\newcommand{\fextd}{f^{\mathrm{ext}}_d}
\newcommand{\fextdp}{f^{\mathrm{ext,+}}_d}
\newcommand{\fextdm}{f^{\mathrm{ext,-}}_d}
\newcommand{\fextdpm}{f^{\mathrm{ext,\pm}}_d}
\newcommand{\Sendo}{\mathcal{S}} 
\newcommand{\Sendoadj}{\Sendo^\ast} 
\newcommand{\Sendobar}{\overline{\Sendo}}
\newcommand\restr[2]{{
  \left.\kern-\nulldelimiterspace 
  #1 
  \right|_{#2} 
}}
\newcommand{\T}{{\mathsf T}}
\newcommand{\cT}{\T^{\ast}}
\newcommand*{\dd}{\mathrm{d}}
\newcommand*{\DD}{{\mathsf D}}
\DeclareMathOperator{\Leg}{Leg}
\newcommand*{\contr}[1]{\iota_{#1}}
\newcommand*{\liedv}[1]{\mathcal{L}_{#1}}
\newcommand*{\Reeb}{\mathcal{R}} 
\newcommand*{\evol}{\mathcal{E}} 
\let\coloneqq\defeq
\let\oldemph\emph
\let\emph\textbf
\renewcommand{\jourvoldelim}{\addcomma\space}
\begin{document}

\title{\sffamily Geometric integrators for adiabatically\\ closed simple thermodynamic systems} 

\author{\sffamily Jaime Bajo$^{1}$, Manuel de Le\'on$^{2,3}$, 
\sffamily and Asier L\'opez-Gord\'on$^{4}$\footnote{Author to whom correspondence should be addressed: \href{mailto:alopez-gordon@impan.pl}{alopez-gordon@impan.pl}.}
\\
\small
$^{1}$ETH Zürich, Switzerland\\
\small
$^{2}$Institute of Mathematical Sciences, Spanish National Research Council, Madrid, Spain\\ 
\small
$^{3}$Royal Academy of Exact, Physical and Natural Sciences of Spain, Madrid, Spain\\
\small
$^{4}$Institute of Mathematics, Polish Academy of Sciences, Warsaw, Poland
}

\date{\vspace*{-1cm}}

\maketitle

\begin{abstract}
\noindent 
A variational formulation for non-equilibrium thermodynamics was developed by Gay-Balmaz and Yoshimura. In a recent article, the first two authors of the present paper introduced partially cosymplectic structures as a geometric framework for thermodynamic systems, recovering the evolution equations obtained variationally. In this paper, we develop a discrete variational principle for adiabatically closed simple thermodynamic systems, which can be utilised to construct numerical integrators for the dynamics of such systems. The effectiveness of our method is illustrated with several examples. 

\medskip

\noindent
\textbf{Keywords:} thermodynamic systems, cosymplectic structures, discrete thermodynamic Euler--Lagrange equations, geometric integrators 


\noindent
\textbf{MSC 2020 classes:} 53Z30, 80A05, 53D15 	
\end{abstract}

\tableofcontents


\section{Introduction}

The use of differential geometric methods in the study of thermodynamics dates back considerably. Traditionally, the geometry of equilibrium thermodynamics has been mainly studied via contact geometry. In this setting, thermodynamic properties are encoded by Legendre submanifolds of the thermodynamic phase space (see \cite{bravetti0,bravetti2019,lacomba1,lacomba2,mrugala1,mrugala2} and references therein). In \cite{A.d.L+2020}, a similar approach was considered, but using the so-called evolution vector field instead of the contact Hamiltonian vector field.

Recently, Gay-Balmaz and Yoshimura \cite{G.Y2018,Y.G2022} introduced variational principles for the description of thermodynamic systems. Their formulation extends the Hamilton principle of classical mechanics to include irreversible processes by introducing additional phenomenological and variational constraints. In a recent paper \cite{d.B2025}, the first two authors of the present article introduced almost cosymplectic structures, new geometric structures that lead to the same dynamics as Gay-Balmaz and Yoshimura's approach. An almost cosymplectic structure on a $(2n+1)$-dimensional manifold $M$ is a pair $(\omega,\eta)$ consisting of a $2$-form $\omega$ and a $1$-form $\eta$ such that $\omega^n \wedge \eta$ is a volume form on $M$. These structures were already considered in \cite{BracketsDuales}, albeit in a different context. They are just a slight generalisation of cosymplectic structures, which were introduced by Libermann \cite{libermann}. A cosymplectic structure is simply an almost cosymplectic structure $(\omega, \eta)$ such that both $\omega$ and $\eta$ are closed forms. We shall be considering the so-called partially cosymplectic structures, for which $\omega$ is assumed to be closed, but $\eta$ is not. Note that any contact form $\eta$ defines a partially cosymplectic structure $(\dd \eta, \eta)$. 

In the current paper, we take advantage of this geometric description to obtain a discrete description and construct geometric integrators appropriate for treating these thermodynamic systems. Due to the complexity of the different types of systems, we focus on adiabatically closed simple thermodynamic systems. We hope to develop similar procedures for more complex systems, and thus cover the entire spectrum of cases, in a forthcoming publication. Our discrete model is inspired by variational integrators (see the quintessential review by Marsden and West \cite{M.W2001}), as well as their extensions to contact Lagrangian systems \cite{A.M.L+2021,V.B.S2019}.

In \cite{G.Y2018a}, Gay-Balmaz and Yoshimura also developed variational integrators for the nonequilibrium thermodynamics of simple closed systems, following a similar approach to ours. Let us not forget that this type of discretization of systems described by a Lagrangian has been studied for more than five decades (see \cite{M.W2001} and references therein). The main difference between our development and that of Gay-Balmaz and Yoshimura is not only that they consider two entropies (there are no major differences in this regard), but also that our approach is based on the geometric description of the thermodynamic system, which allows us to use the standard techniques of geometric mechanics. Here we have addressed only the case of infinitesimal symmetries and Noether's theorem, as well as the existence of a momentum mapping, but in subsequent papers we propose to address aspects such as coisotropic reduction, Hamilton--Jacobi theory, and the extension to more complex systems. These objectives require a thorough understanding of the underlying geometry, and that is the first step we are currently developing. A more explicit comparison between our approach and that of Gay-Balmaz and Yoshimura can be found in Table~\ref{table:comparison_approaches}.

It is also worth mentioning the numerical methods for a supercritical van der Waals fluid enclosed in a cylinder developed by Fülöp, Szücz and Takács in \cite{Fulop2025}. Their approach is based on the GENERIC framework, {which, as its name suggests, is a very general framework for describing systems with dissipation (see~\cite{P.K.G2018}).}  
{More precisely, the authors consider
a dissipative dynamical system of the form} 
\begin{equation}\label{eq:GENERIC}
    \frac{\dd x^i}{\dd t} = \sum_j \left(L^i_j \frac{\partial E}{\partial x^j} + M^i_j \frac{\partial S}{\partial x^j}\right)\, ,
\end{equation}
where $(L^i_j)$ is a skew-symmetric matrix (representing a presymplectic form), $(M^i_j)$ is a symmetric matrix, while $E$ and $S$ are functions that can be regarded as the total energy and the total entropy of the system, respectively. The authors then consider a generalisation of the symplectic Euler scheme, to discretise the evolution equations of the fluid. They obtain meritorious numerical results. However, they do not develop a general method for discretising thermodynamic systems, but rather construct an ad hoc generalisation of the symplectic method that serves their system under consideration. Moreover, their approach does not consider the geometry of the thermodynamic phase space whatsoever. We would also like to mention that the dynamics \eqref{eq:GENERIC} corresponds to a Rayleigh-type dissipation, i.e. a dissipation proportional to velocity or momentum. Indeed, the symmetric matrix $(M^i_j)$ determines a Rayleigh tensor. We previously studied systems with Rayleigh dissipation in \cite{d.L.L2021,d.L.L2022} and their discretisation in \cite{d.L.L2022a}.

The remainder of the paper is structured as follows. In Section~\ref{sec:preliminaries}, we present partially cosymplectic structures, recall the continuous model for an adiabatically closed simple thermodynamic system and study some results concerning infinitesimal symmetries in this setting. Section~\ref{sec:discrete_model} presents our discrete model for such systems. In Subsection~\ref{sec:discrete_variational_principle}, we develop a discrete variational principle for such systems, and obtain the discrete thermodynamic Euler--Lagrange equations. The discrete flow is analysed in Subsection~\ref{sec:discrete_flows}. A discrete Noether's theorem is obtained in Subsection~\ref{sec:discrete_Noether}. Some examples and simulations are presented in Section~\ref{sec:examples}.

\textbf{Notation and conventions.}
Throughout the paper, all manifolds are assumed to be smooth, connected and second-countable. Maps are assumed to be smooth. Summation over repeated indices is understood. Given a differential $k$-form $\omega$ on a manifold $M$, the kernel of the vector bundle morphism $v \in \T M \mapsto \contr{v} \omega \in \bigwedge^{k-1} (\cT M)$ will be called the kernel of $\omega$ and denoted by $\ker \omega$.

\section{Preliminaries}\label{sec:preliminaries}

In this section, we will present the notion of partially cosymplectic structures introduced in \cite{BracketsDuales,d.B2025}, and recall the continuous model for an adiabatically closed simple thermodynamic system (see \cite{G.Y2018,d.B2025}).

\subsection{Partially cosymplectic structures} \label{subsec:partially_cosymplectic}


\begin{proposition}\label{proposition:equivalent_defs_almost_cosymp}
    Let $M$ be a $(2n+1)$-dimensional manifold equipped with a $2$-form $\omega$ and a $1$-form $\eta$. Then, the following statements are equivalent:
    \begin{enumerate}
        \item $\omega^n\wedge \eta$ is a volume form on $M$,
        \item $\ker \eta \cap \ker \omega = \{0_M\}$, where $0_M$ denotes the zero-section of $\T M$,
        \item the map 
        \begin{equation}\label{eq:flat_omega_eta}
        \begin{aligned}
            \flat_{(\omega, \, \eta)}\colon \T M &\to \cT M\\
             v & \mapsto \contr{v} \omega + \eta(v) \eta
        \end{aligned}
        \end{equation}
        is a vector bundle isomorphism,
        \item the cotangent bundle can be decomposed as the Whitney sum
        \begin{equation}\label{eq:decomposition_cotangent}
            \cT M = \flat_{\omega}(\T M) \oplus \gen{\eta}\, ,
        \end{equation}
        where $\flat_{\omega}\colon \T M \to \cT M$ is the vector bundle morphism given by $\flat_{\omega}(v) = \contr{v}\omega$.
    \end{enumerate}
\end{proposition}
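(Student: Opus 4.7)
The plan is to fix a point $p\in M$ and work on the fiber $V = \T_pM$, since all four conditions are pointwise statements on $\omega_p$ and $\eta_p$. I prove the equivalences in the cycle (i)$\Leftrightarrow$(ii), (ii)$\Leftrightarrow$(iii), (iii)$\Leftrightarrow$(iv). The central linear-algebra fact used throughout is that every alternating $2$-form on an odd-dimensional space has even rank, hence $\dim\ker\omega_p \ge 1$.

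For (i)$\Rightarrow$(ii), I argue by contraposition: if $v$ lies in both kernels, the Leibniz rule yields $\iota_v(\omega^n\wedge\eta) = n\,(\iota_v\omega)\wedge\omega^{n-1}\wedge\eta + \omega^n\,\eta(v) = 0$, incompatible with $\omega^n\wedge\eta$ being a volume form. For the reverse, (ii) first forces $\eta_p \ne 0$, since otherwise $\ker\eta_p = V$ would contain the nontrivial $\ker\omega_p$; then the inequality $\dim(\ker\omega_p\cap\ker\eta_p)\geq \dim\ker\omega_p + 2n - (2n+1)$ combined with the oddness of $\dim\ker\omega_p$ pins down $\dim\ker\omega_p=1$, so $\omega_p$ has maximal rank $2n$ and $\omega_p^n\ne 0$. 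Contracting $\omega^n\wedge\eta$ with a generator $w$ of $\ker\omega_p$ yields $\omega_p^n\,\eta_p(w)$, which is nonzero because $\eta_p(w)\ne 0$ by (ii); this is the crucial step.

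The equivalence (ii)$\Leftrightarrow$(iii) is almost formal: if $\flat_{(\omega,\eta)}(v)=0$, evaluating on $v$ gives $\eta_p(v)^2=0$, forcing $v\in\ker\omega_p\cap\ker\eta_p$, and the converse inclusion is immediate; fiberwise injectivity then upgrades to a bundle isomorphism since $\T M$ and $\cT M$ have equal rank.

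For (iii)$\Rightarrow$(iv), every covector $\alpha\in\cT_pM$ is expressed as $\iota_v\omega_p+\eta_p(v)\eta_p$ via the inverse of $\flat_{(\omega,\eta)}$, so $\cT_pM = \flat_{\omega_p}(\T_pM)+\gen{\eta_p}$; the sum is direct because a relation $\iota_v\omega_p = c\eta_p$ contracted with any nonzero $w\in\ker\omega_p$ yields $c\,\eta_p(w)=0$, and $\eta_p(w)\ne 0$ by (ii), forcing $c=0$. The converse (iv)$\Rightarrow$(iii) is symmetric: (iv) forces $\eta_p\ne 0$ and $\omega_p$ of rank $2n$, from which one recovers (ii) and hence (iii). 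The main obstacle in the whole scheme is the implication (ii)$\Rightarrow$(i), which relies crucially on the oddness of the dimension to make the dimension count bite; the other equivalences reduce to elementary linear-algebra bookkeeping.
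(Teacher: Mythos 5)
Your overall architecture matches the paper's: everything is reduced to pointwise linear algebra on $\T_pM$, with condition (ii) as the hub. You are considerably more explicit than the paper, and in the implication (ii)$\Rightarrow$(i) the extra care pays off: the paper simply quotes the identity $\ker(\omega^n\wedge\eta)=\ker\omega\cap\ker\eta$, whereas your dimension count ($\dim\ker\omega_p$ is odd, and the intersection bound forces it to equal $1$) followed by the contraction $\iota_w(\omega^n\wedge\eta)=\eta(w)\,\omega^n\ne 0$ for a generator $w$ of $\ker\omega_p$ is a complete argument. The steps (i)$\Rightarrow$(ii), (ii)$\Leftrightarrow$(iii) and (iii)$\Rightarrow$(iv) are all correct as written.

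The one place where your sketch has a real gap is (iv)$\Rightarrow$(iii). You assert that (iv) ``forces $\eta_p\ne 0$ and $\omega_p$ of rank $2n$, from which one recovers (ii)''. Those two facts alone do not imply (ii): take $\omega_p=\sum_{i=1}^{n} dx^{2i-1}\wedge dx^{2i}$ and $\eta_p=dx^1$ on $\RR^{2n+1}$; then $\eta_p\ne 0$ and $\omega_p$ has rank $2n$, yet $\ker\omega_p=\langle \partial_{2n+1}\rangle\subseteq\ker\eta_p$, so (ii) fails (and so does (iv), since $\eta_p\in\flat_{\omega_p}(\T_pM)$). What you must actually use is the directness of the sum in (iv), i.e.\ $\eta_p\notin\flat_{\omega_p}(\T_pM)$. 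Since $\flat_{\omega_p}(\T_pM)$ is precisely the annihilator of $\ker\omega_p$ (the inclusion is clear and both spaces have dimension $2n$), the condition $\eta_p\notin\flat_{\omega_p}(\T_pM)$ says exactly that $\eta_p$ does not vanish on the line $\ker\omega_p$, which is (ii). This is the paper's ``take annihilators of both sides of the decomposition'' argument; with that one sentence repaired your proof is complete.
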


\begin{proof}
    Notice that a top-degree form on a manifold is a volume form if and only if its kernel is the zero-section. The equivalence between the first and second statements then follows from 
    {
    the equivalence between the following statements:
    \begin{enumerate}
        \item $\contr{v}\left(\omega^n \wedge \eta\right) =0 \Rightarrow v = 0$,
        \item $\contr{v} \omega = 0 \text{ and } \contr{v} \eta = 0 \Rightarrow v =0$.
    \end{enumerate} 
    }
    The equivalence between the second and third statements is a trivial linear algebraic observation. Finally, the equivalence between the second and fourth statements can be seen by taking the annihilator from both sides of equation~\eqref{eq:decomposition_cotangent}.
\end{proof}

\begin{definition}
    Let $M$ be a $(2n+1)$-dimensional manifold with a $2$-form $\omega$ and a $1$-form $\eta$. The pair $(\omega, \eta)$ is called an \emph{almost cosymplectic structure} if it satisfies any equivalent condition from Proposition~\ref{proposition:equivalent_defs_almost_cosymp}. 
    
    If $\omega$ is closed, then $(\omega, \eta)$ is called a \emph{partially cosymplectic structure}. If additionally $\eta$ is closed, then $(\omega, \eta)$ is called a \emph{cosymplectic structure}. 
    
    The triple $(M, \omega, \eta)$ is called an \emph{almost cosymplectic manifold}, \emph{partially cosymplectic manifold}, or \emph{cosymplectic manifold}, respectively.
\end{definition}

In such a case, the vector bundle isomorphism $\flat_{(\omega, \, \eta)}\colon \T M \to \cT M$ naturally induces a $\Cinfty(M)$-module isomorphism $\flat_{(\omega, \, \eta)} \colon \X(M) \to \Omega^1(M)$.

\begin{definition}\label{def:Reeb_evolution_vector_fields}
      Let $(M, \omega, \eta)$ be an almost cosymplectic manifold. The \emph{Reeb vector field} $\Reeb$ on $M$ is uniquely determined by $\Reeb = \flat_{(\omega, \, \eta)}^{-1} (\eta)$, or equivalently,
    $$\contr{\Reeb} \omega = 0\, , \quad \contr{\Reeb} \eta = 1\, .$$
    For each function $f\in \Cinfty(M)$, the \emph{evolution vector field} $\evol_{f}$ is uniquely determined by 
    $$\evol_{f} = \flat_{(\omega, \, \eta)}^{-1}(\dd f) + \Reeb\, .$$
    In other words,
    $$\contr{\evol_{f}} \omega = \dd f - \Reeb(f) \eta\, , \quad \contr{\evol_{f}} \eta = \Reeb(f) +1 \, .$$
\end{definition}

\begin{remark}
    A $1$-form $\eta$ on a $(2n+1)$-dimensional manifold $M$ is a contact form if and only if the pair $(\dd \eta, \eta)$ is a partially cosymplectic structure on $M$. The Reeb vector field with respect to both structures is the same one; and the isomorphism between $\T M$ and $\cT M$ defined by $\eta$ is precisely $\flat_{(\eta,\, \dd \eta)}$. For more details, see \cite{de2019contact}.
\end{remark}

Let $(\omega, \eta)$ be a partially cosymplectic manifold. Then, $K = \ker \eta$ is a vector subbundle of the tangent bundle $\T M$. Moreover, the restriction of the $2$-form $\omega$ to its fibers makes $K$ a symplectic vector bundle. However, $K$ is not an involutive distribution on $M$. Indeed, for each pair of sections $X$ and $Y$ of $K$, we have
$$\contr{[X,Y]} \eta = \liedv{X} \contr{Y} \eta - \contr{Y} \liedv{X} \eta =  - \contr{Y} \contr{X} \dd \eta \, ,$$
whose right-hand side is non-vanishing in general.

For additional details on almost and partially cosymplectic structures, refer to \cite{BracketsDuales,d.B2025}.

\subsection{The continuous model: adiabatically closed simple thermodynamic systems}\label{subsec:continuous_model}

The phase space of a mechanical Hamiltonian system can usually be regarded as the cotangent bundle $\cT Q$ of its configuration space $Q$. This means that the position of the system is determined by a point $q\in Q$, and its momentum is a covector on the corresponding fiber: $p_q\in \cT_q Q$. As it is well-known, $\cT Q$ is canonically endowed with a symplectic form $\omega_Q$. Indeed, $\omega_{\mathcal Q} = - d \Theta_{Q}$, where $\Theta_{Q} = p_i \, dq^i$ is the Liouville form on $\cT {Q}$ in bundle coordinates $(q^i, p_i)$ on $\cT Q$ associated with some local coordinates $(q^i)$ on $Q$. Then, we have $\omega_Q = \dd q^i \wedge \dd p_i$ (see \cite{godbillon1969geometrie,abraham,de2011methods}).

Let us recall that a simple thermodynamic system is a macroscopic system for which one thermal variable (e.g., the entropy) and a finite set of non-thermal variables are sufficient to entirely describe the state of the system. The system is called adiabatically closed if there is no exchange of heat, nor of matter, with the exterior. Therefore, in our model, the configuration of an adiabatically closed simple thermodynamic system will be determined by a point on the manifold $M = \cT Q \times \RR$. Its canonical bundle coordinates will be denoted by $(q^i, p_i, S)$, and, physically, $S$ will be interpreted as the entropy of the system. The Hamiltonian function will be a function $H\in \Cinfty(M)$.


Given a fiber bundle $\pi\colon E\to B$, a one-form $\alpha\in \Omega^{1}(E)$ on $E$ is called \emph{semibasic} if $\alpha(Z)$ vanishes for every vertical vector field $Z$ on $E$. If $(x^i, y^a)$ are bundle coordinates on $E$ associated with local coordinates $(x^i)$ on $B$, then $\alpha$ reads
\begin{equation}
    \alpha = \alpha_i (x, y)\, \dd x^i\, .
\end{equation} 
In our model, non-conservative forces (i.e., those that cannot be derived from a Hamiltonian function) will be represented by semibasic one-forms on the vector bundle $\pi\colon M = \cT Q \times \RR \to Q$, where the projection is the natural one. More precisely, in bundle coordinates $(q^i, p_i, S)$, the external force $\Fext$ and the friction force $\Ffr$ will have the local expressions
$$\Fext = \Fext_i (q, p, S)\, \dd q^i\, , \quad \Ffr = \Ffr_i (q, p, S)\, \dd q^i\, .$$
From the Hamiltonian function $H$ and the friction force $\Ffr$, we can define the one-form 
$$\eta=-\frac{\partial H}{\partial S} \dd S-\Ffr\, .$$
In addition, pullbacking the canonical symplectic form $\omega_Q$ on $\cT Q$ by the natural projection $\pi_1\colon \cT Q \times \RR \to \cT Q$, we can define the presymplectic form $\omega = \pi_1^\ast \omega_Q$ of corank $1$ on $M$. Similarly, let $\theta=\pi_1^\ast \theta_Q$, with $\theta_Q$ the canonical one-form on $\cT Q$.
In bundle coordinates $(q^i,p_i,S)$, these forms read
\begin{equation}\label{eq:local_forms_Hamiltonian}
    \eta=-\frac{\partial H}{\partial S} \dd S-\Ffr_i(q,p,S)\dd q^i \, , \quad \theta = p_i \dd q^i\, , \quad \omega=\dd q^i\wedge \dd p_i\, .
\end{equation}
Assuming that $\displaystyle{\pder{H}{S}}$ is nowhere-vanishing, then the pair $(\omega,\eta)$ is a partially cosymplectic structure on $M$, regardless of what expression $\Ffr$ takes. As we shall explain below, for the type of Hamiltonians that we will consider, $\displaystyle{\pder{H}{S}}$ will be the temperature of the system. Hence, our assumption simply means that the temperature of the system is not the absolute zero.
It is worth recalling that by the third law of thermodynamics, the absolute zero cannot be reached adiabatically \cite{Kieu2019}. Hereinafter, we shall thus assume that $(M, \omega, \eta)$ is a partially cosymplectic manifold. The tuple $(M, \omega, \eta, H, \Fext, \Ffr)$ will be called the \emph{Hamiltonian thermodynamic system}.

Consider the $\Cinfty(M)$-module isomorphism $\flat_{(\omega,\, \eta)}\colon \X(M) \to \Omega^1(M)$ defined in the previous subsection. In bundle coordinates, it is given by
$$ 
  \flat_{(\omega,\, \eta)}\left(\frac{\partial}{\partial q^i}\right)=\dd p_i-\Ffr_i\eta\, , \quad
  \flat_{(\omega,\, \eta)}\left(\frac{\partial}{\partial p_i}\right)=-\dd q^i \, , \quad
  \flat_{(\omega,\, \eta)}\left(\frac{\partial}{\partial S}\right)=-\frac{\partial H}{\partial S}\eta \, .
$$
The \emph{evolution vector field} $\evol_{H, \Fext}$ of $H$  \emph{subject to external forces} $\Fext$ is defined by
\begin{equation} \label{EvolutionFieldAdiabaticallyClosedSimple}
    \flat_{(\omega,\, \eta)}\left(\evol_{H, \Fext}\right)=\dd H +\eta -\Fext\, .
\end{equation}
In bundle coordinates, it reads
\begin{equation}\label{eq:evol_H_F_coords}
    \evol_{H, \Fext} 
    = \frac{\partial H}{\partial p_i} \pder{}{q^i} 
    +\left(\Ffr_i +\Fext_i  -\frac{\partial H}{\partial q^i}\right) \pder{}{p_i}
    -\frac{1}{\frac{\partial H}{\partial S}}\frac{\partial H}{\partial p_j}\Ffr_j \pder{}{S}\, .
\end{equation}
Therefore, an integral curve $\sigma(t) = (q^i(t),p_i(t),S(t))$ of $\evol_{H, \Fext}$ satisfies the following system of first-order ordinary differential equations:
\begin{equation}\label{CurveAdiabaticallyClosedSimple}
\begin{aligned}
    \frac{\dd q^i}{\dd t} &= \frac{\partial H}{\partial p_i}\, ,\\
    \frac{\dd p_i}{\dd t} &= -\frac{\partial H}{\partial q^i}+\Ffr_i+\Fext_i\, ,\\
    \frac{\dd S}{\dd t} &= -\frac{1}{\frac{\partial H}{\partial S}}\frac{\partial H}{\partial p_j}\Ffr_j\, .
\end{aligned} 
\end{equation}

\bigskip

\noindent Note that the evolution vector field $\evol_{H, \Fext}$ of $H$  subject to external forces $\Fext$ can be written as
$$\evol_{H, \Fext} = \evol_H - Z_{\Fext}\, ,$$
where $\evol_H$ is the evolution vector field of $H$ (see Definition~\ref{def:Reeb_evolution_vector_fields}) and $Z_{\Fext}$ is uniquely determined by $\flat_{(\omega,\, \eta)}(Z_{\Fext}) = \Fext$, i.e.,
$$\contr{Z_{\Fext}} \omega = \Fext\, , \quad \contr{Z_{\Fext}} \eta = 0 \, .$$
or in coordinates
$$Z_{\Fext} = - \Fext_i \frac{\partial}{\partial p_i}\, .$$
Consequently,
\begin{equation}\label{eq:forced_evolution_contractions}
    \contr{\evol_{H, \Fext}} \omega = \dd H - \Reeb(H) \eta - \Fext \, , \quad \contr{\evol_{H, \Fext}} \eta = \Reeb(H) +1 \, .
\end{equation}




\subsubsection*{The Lagrangian formalism}

In the Lagrangian formalism, the configuration of an adiabatically closed simple thermodynamic system will be determined by a point in $\T Q \times \RR$. The dynamics will be derived from a Lagrangian function $L\in \Cinfty(\T Q \times \RR)$.
The \emph{Legendre transform} $\Leg\colon \T Q\times \RR \to \cT Q \times \RR$ is locally given by
$$\Leg(q^i, v^i, S) = \left(q^i, \pder{L}{v^i}, S\right)\, ,$$
in canonical bundle coordinates $(q^i, v^i, S)$ of $\T Q \times \RR$ and $(q^i, p_i, S)$ of $\cT Q \times \RR$.

The \emph{Lagrangian energy} is the function $E_L = \Delta(L) - L\in \Cinfty(\T Q\times \RR)$, where $\displaystyle{\Delta = v^i \partial_{v^i}}$ is the Liouville vector field (i.e., the infinitesimal generator of homotheties on the fibers of $\T Q$). Let us recall that $L$ is regular, i.e., $\Leg$ is a local diffeomorphism if and only if  the Hessian matrix $$ \left(W_{ij}\right) = \left({\frac{\partial^2L}{\partial v^i \partial v^j}}\right)$$
is non-singular. We shall assume that $L$ is hyper-regular, namely that $\Leg$ is a (global) vector bundle isomorphism. (If $L$ is not hyper-regular, it suffices to work in an open subset of $\T Q\times \R$ such that the restriction of $\Leg$ to that subset is a diffeomorphism.) Then, from $E_L$ we can define the Hamiltonian function $H = E_L \circ \Leg^{-1}$. In bundle coordinates,
$$H(q,p,S) = p_i v^i (q,p,S)-L(q,v(q,p,S),S)\, ,$$
where $v^i(q, p, S)$ denotes the coordinate $v^i$ of $\Leg^{-1} (q, p, S)$. 
Therefore,
\begin{align*}
    &\pder{H}{q^i} = p_j \pder{v^j}{q^i} - \pder{L}{q^i} - \pder{L}{v^j}\pder{v^j}{q^i} = - \pder{L}{q^i} \, , \\
    &\pder{H}{p_i}= v^i+p_j\pder{v^j}{p_i}-\pder{L}{v^j}\pder{v_j}{p_i}=v^i
    &\pder{H}{S} = p_j \pder{v^j}{S} - \pder{L}{S} - \pder{L}{v^j}\pder{v^j}{S} = - \pder{L}{S}\, .
\end{align*}
The second of the previous equations and the first equation of~\eqref{CurveAdiabaticallyClosedSimple} imply that
\begin{equation}\label{sode}
    \frac{\dd q^i}{\dd t}= v^i.
\end{equation}
The second of the equations~\eqref{CurveAdiabaticallyClosedSimple} then implies that
$$
\frac{\dd}{\dd t} \left(\pder{L}{v^i}\right) =  \pder{L}{q^i}+\Ffr_i+\Fext_i\, ,
$$
and the combination of the first and third of the equations~\eqref{CurveAdiabaticallyClosedSimple} yields
$$\pder{L}{S}\frac{\dd S}{\dd t} = \frac{\dd q^i}{\dd t} \Ffr_i\, .$$
Note that these last two differential equations are for functions on the parameters $(q^i, p_i, S)$. In order to obtain the corresponding differential equations for the parameters $(q^i, v^i, S)$, it suffices to compose with the Legendre transformation (which we have assumed to be a diffeomorphism), obtaining
\begin{equation}\label{eq:thermodynamic_Euler_Lagrange}
\begin{aligned} 
        &\frac{\dd}{\dd t}\left(\pder{L}{v^i}\right)-\pder{L}{q^i}=\Ftfr_i+\Ftext_i, \\
        &\pder{L}{S}\frac{\dd S}{\dd t}=\frac{\dd q^i}{\dd t}\Ftfr_i,
\end{aligned}
\end{equation}
where, by means of the Legendre transform, we define the following differential forms on $\T Q\times \RR$:
\begin{equation}\label{eq:forms_Lagrangian}
\begin{array}{lll}
\theta_L = \Leg^\ast \theta\, , \quad &\omega_L = -\dd \theta_L = \Leg^\ast \omega\, , \quad &\eta_L = \Leg^\ast \eta\, , \\\\
\Ftext = \Leg^\ast \Fext\, , \quad &\Ftfr = \Leg^\ast \Ffr\, ,
\end{array}
\end{equation}
whose local expressions are
\begin{equation}\label{eq:forms_Lagrangian_local_expressions}
\theta_L = \frac{\partial L}{\partial v^i} \dd q^i\, , \quad
\eta_L=\frac{\partial L}{\partial S} \dd S-\Ftfr\, , \quad
\Ftfr = \Ftfr_i \, \dd q^i \, , \quad
\Ftext = \Ftext_i \, \dd q^i \, .
\end{equation}
We shall refer to equations \eqref{eq:thermodynamic_Euler_Lagrange} as the \emph{thermodynamic Euler--Lagrange equations}. The tuple $(\T Q\times \RR, \omega_L, \eta_L, L, \Ftext, \Ftfr)$ will be called the \emph{Lagrangian thermodynamic system}.

One can easily see that $(\omega_L, \eta_L)$ is a partially cosymplectic structure on $\T Q \times \RR$. Indeed, since $\Leg$ is a vector bundle isomorphism, it maps $\ker \eta$ into $\ker \eta_L$ and $\ker \omega$ into $\ker \omega_L$ (see Proposition~\ref{proposition:equivalent_defs_almost_cosymp}).
Let $\sode_{(L,\, \Ftext)}=\evol_{(E_L,\, \Ftext)}$ be the evolution vector field of $E_L$ subject to the external forces $\Ftext$, defined by the equation
$$
    \flat_{(\Omega_L,\eta_L)}(\sode_{(L,\, \Ftext)})=\dd E_L+\eta_L-\Ftext\, .
$$
It is related with the evolution vector field of $H$ subject to the external forces $\Fext$ as follows:
$$\evol_{(H,\, \Fext)} =\T \Leg\, \circ\, \sode_{(L,\, \Ftext)} \circ \Leg^{-1}\, .$$
Hence, $\gamma\colon I \subseteq \RR \to \T Q \times \RR$ is an integral curve of $\sode_{(L,\, \Ftext)}$ if and only if $\Leg^{-1} \circ\, \gamma$ is an integral curve of $\evol_{(H,\, \Fext)}$. Furthermore, $\gamma(t) = \big(q^i(t), v^i(t), S(t)\big)$ is an integral curve of $\sode_{(L,\, \Ftext)}$ if and only if it satisfies the system of differential equations~\eqref{eq:thermodynamic_Euler_Lagrange}.

\bigskip 
\noindent It is noteworthy that equations~\eqref{eq:thermodynamic_Euler_Lagrange} are equivalent to those obtained by Gay-Balmaz and Yoshimura \cite{G.Y2018,Y.G2022}.


\begin{remark}\label{remark:temperature}
    Following \cite{G.Y2018}, in this formulation, the temperature is defined as minus the derivative of $L$ with respect to $S$, i.e., $T=-\frac{\partial L}{\partial S}$, which is assumed to be positive. When the Lagrangian has the standard form
    $$
    L(q, v, S)=K(q, v)-U(q, S)\, ,
    $$
    where the kinetic energy $K$ is assumed to be independent of $S$, and $U(q, S)$ is the internal energy, then $T=-\frac{\partial L}{\partial S}=\frac{\partial U}{\partial S}$ recovers the standard definition of the temperature in thermodynamics.
\end{remark}

\begin{remark}\label{remark:RelationToContact}
    Previous studies of thermodynamics have been carried out using contact geometry. In \cite{A.d.L+2020}, given a function $H\in \Cinfty(M)$, a contact evolution field $Y_H\in \X(M)$ is defined as
    $$
    \flat_{(\eta_C,\, \dd \eta_C)}(Y_H)= \dd H-\Reeb_C(H)\eta_C\, ,
    $$
    where $\eta_C$ is a contact form on $M$, with Reeb vector field $\Reeb_C$. In the case where $M=\cT Q\times \RR$, with canonical bundle coordinates $(q^i, p_i, S)$, these objects read
    $$\eta_C=\dd S-p_i\dd q^i\, , \quad \Reeb_C = \pder{}{S}\, , \quad Y_H = \pder{H}{p_i} \pder{}{q^i} - \left(\pder{H}{q^i} + p_i \pder{H}{S}\right) \pder{}{p_i} + p_i \pder{H}{p_i} \pder{}{S}\, .$$
    On the other hand, we can endow $M$ with the partially cosymplectic structure $(\omega,\eta)$ given by \eqref{eq:local_forms_Hamiltonian}, for $\Ffr_i=-\Reeb_C(H)p_i$. In this way, we can write $Y_H = \evol_{H, \Fext}$, with $\Fext\equiv 0$. This can be easily verified by comparing the local expression of $Y_H$ above with \eqref{eq:evol_H_F_coords}, the local expression of $\evol_{H, \Fext}$.
    Thus, the evolution vector defined in \cite{A.d.L+2020} can be regarded as a particular case of the one defined in this paper. Notice that, since $\eta\neq \eta_C$, the isomorphisms $ \flat_{(\eta_C,\, \dd \eta_C)}$ and $\flat_{(\eta,\, \dd \eta)}$ are different.

     \noindent Similarly, the thermodynamical Herglotz equations for a contact Lagrangian from \cite{A.d.L+2020} can be recovered from our thermodynamic Euler--Lagrange equations \eqref{eq:thermodynamic_Euler_Lagrange}. Indeed,
     taking $\Ftfr_i = \pder{L}{S}\pder{L}{v_i}$ and $\Ftext\equiv 0$, we have that
     \begin{equation}
    \begin{aligned} 
            &\frac{\dd}{\dd t}\left(\pder{L}{v^i}\right)-\pder{L}{q^i}-\pder{L}{S} \pder{L}{v_i}=0\, , \\
            &\frac{\dd S}{\dd t}=\frac{\dd q^i}{\dd t} \pder{L}{v_i}\, .
    \end{aligned}
    \end{equation}
\end{remark}

\subsubsection*{SODEs}

As we have proven above, the solutions of the equations (\ref{eq:thermodynamic_Euler_Lagrange}) satisfy the equation (\ref{sode}), that is, the corresponding dynamics is a Second Order Differential Equation (SODE, for short). We shall now recall some notions about SODEs.

Let us recall that the tangent bundle $\T Q$ of a manifold $Q$ is canonically endowed with the \emph{vertical endomorphism} $\Sendo$, a $(1, 1)$ tensor field which locally reads
$$\Sendo =  \frac{\partial}{\partial v^i} \otimes \dd q^i\, ,$$
in canonical bundle coordinates $(q^i, v^i)$, i.e.,
$$\Sendo \left(\frac{\partial}{\partial q^i}\right) = \frac{\partial}{\partial v^i}\, , \quad \Sendo\left(\frac{\partial}{\partial v^i}\right) = 0\, .$$
In standard Lagrangian mechanics, given a Lagrangian $L: \T Q \to \mathbb{R}$ 
is regular if and only if the $2$-form $\omega_L = - \dd \big(\Sendoadj(\dd L)\big)$ is symplectic. In that case, we obtain the dynamics are given by the unique vector field $\xi_L$ satisfying the equation
$$
\contr{\xi_L} \omega_L = \dd E_L\, ,
$$
where $E_L = \Delta (L) - L$, with $\Delta$ the Liouville vector field.

The vector field $\xi_L$ above is an example of \emph{SODE}, sometimes called a semispray or semigerbe.
A SODE $\xi$ is a vector field on $\T Q$ such that its integral curves are just the tangent lifts of its projections to $\T Q$. Thus, in bundle coordinates $(q^i, v^i)$ on $\T Q$, a SODE reads
$$
\xi = v^i \, \frac{\partial}{\partial q^i} + \xi^i(q,v) \, \frac{\partial}{\partial v^i}\, ,
$$
for some function $\xi^i$.
One immediately can see that the above condition is equivalent to
$$
\Sendo (\xi) = \Delta\, .
$$
Then, its integral curves $(q^i (t), v^i(t))$ satisfy the following system of differential equations
\begin{eqnarray*}
    && \frac{\dd q^i}{\dd t} = v^i\, ,\\
    && \frac{\dd v^i}{\dd t} = \xi^i(q,v)\, ,
\end{eqnarray*}
and thereupon
\begin{eqnarray*}
    && \frac{\dd q^i}{\dd t} = v^i\, ,\\
    && \frac{\dd^2q^i}{\dd t^2} = \xi^i\left(q,\frac{\dd q}{\dd t}\right)\, ,
\end{eqnarray*}
which justifies the name of SODE.

We can extend this notion to the time-dependent case as a vector field $\xi$ on $\mathbb{R} \times \T Q$ satisfying the condition
$$
\Sendobar (\xi) = \frac{\partial}{\partial t} + \Delta,
$$
where 
$$
\Sendobar = \dd t \otimes \frac{\partial}{\partial t}  + \Sendo
$$ 
is the so-called stable tangent structure on $\mathbb{R} \times \T Q$ (see \cite{poincare}). Such a vector field is called a Non-autonomous Second Order Differential Equations (NSODE, for short) and its local expression is as follows:
$$
\xi = \frac{\partial}{\partial t} + v^i \, \frac{\partial}{\partial q^i} + \xi^i(t,q,v) \, \frac{\partial}{\partial v^i}.
$$
This is just the type of vector field that appears in the solution of the Euler--Lagrangian equations for non-autonomous (or time-dependent) Lagrangians on $\mathbb{R} \times \T Q$.

However, a different situation occurs when we are dealing with contact Lagrangian systems on $\T Q \times \mathbb{R}$. In that case, we are interested in vector fields $\xi$ on $\T Q \times \mathbb{R}$ satisfying the equation
$$
\Sendo(\xi) = \Delta,
$$
where, slightly abusing of notation, we denote by $\Sendo$ the trivial lift to $\T Q \times \RR$ of the canonical vertical endomorphism on $\T Q$. Then, the local expression in induced coordinates $(q^i, v^i, z)$ is
$$
\xi = \mathcal{A}(q,v,z) \, \frac{\partial}{\partial z} + v^i \, \frac{\partial}{\partial q^i} + \xi^i(q,v,z) \, \frac{\partial}{\partial v^i},
$$
for some functions $\mathcal{A}$ and $\xi^i$. In the case of a contact Lagrangian system we get
$$
\xi = L(q,v,z) \, \frac{\partial}{\partial z} + v^i \, \frac{\partial}{\partial q^i} + \xi^i(q,v,z) \, \frac{\partial}{\partial v^i},
$$
where $L\colon \T Q \times \RR\to \RR$ is the Lagrangian function (see \cite{A.d.L+2020}). However, in the current paper that function $\mathcal{A}$ is more general since the system can also include external and friction forces.

\noindent Regarding $\Sendo$ as an endomorphism on the vector bundle $\T(\T Q \times \RR)$, we define $\Sendoadj$ as the adjoint endomorphism on the dual bundle $\cT (\T Q \times \RR)$.
Using the local expression~\eqref{eq:forms_Lagrangian_local_expressions}, it is straightforward to verify that $\theta_L$ can be written as
$$\theta_L = \Sendoadj(\dd L)\, ,$$
which implies that
$$\omega_L = - \dd \big(\Sendoadj(\dd L)\big)\, ,$$
and
\begin{equation}\label{eq:contraction_theta_endomorphism}
    \contr{X}\theta_L = \liedv{\Sendo\circ X} L\, , \quad \forall X\in \X(\T Q \times \RR)\, .
\end{equation}

\subsubsection*{Evolution of the energy and the geometric structures}
The local expression~\eqref{eq:evol_H_F_coords} of the forced evolution vector field implies that
\begin{equation}
    \evol_{H, \Fext} (H)
    = \Fext_i \pder{H}{p_i}\, .
\end{equation}
Therefore, by integrating along the flow $\phi_t$ of $ \evol_{H, \Fext}$, we obtain
\begin{equation}
    H \circ \phi_t \left(q^i, p_i, S\right) = H \left(q^i, p_i, S\right) + \int_0^t  \left(\Fext_i \circ \phi_\tau \left(q^i, p_i, S\right)\right) \left(\pder{H}{p_i} \circ \phi_\tau \left(q^i, p_i, S\right)\right)\, \dd \tau\, .
\end{equation}
In particular, the Hamiltonian is a constant of the motion if
$\Fext\equiv 0$. Since the inverse Legendre transform maps $H$ to $E_L$ and $\phi_t$ to the flow $\psi_t$ of $\sode_{(L,\, \Ftext)}$, we have an analogous expression in the Lagrangian counterpart:
\begin{equation}
    E_L \circ \psi_t \left(q^i_0, v^i_0, S_0\right) = E_L \left(q^i_0, v^i_0, S_0\right) + \int_0^t  \left( \Ftext_i \circ \psi_\tau \left(q^i_0, v^i_0, S_0\right)\right) \left(v^i \circ \psi_\tau \left(q^i_0, v^i_0, S_0\right)\right)\, \dd \tau\, .
\end{equation}

The analysis of the relations between the evolution field and the geometric structure allows us to give an equivalent characterization of it.

Let $H$ be a Hamiltonian function on $M$, subject to the external forces $\Fext$ and to the friction forces. Consider the partially cosymplectic structure $(\omega,\eta)$ on $M$ derived from it, with local expressions \eqref{eq:local_forms_Hamiltonian}. It is easy to see (for instance, using the local expressions) that the evolution vector field subject to external forces $\evol_{H, \Fext}$  satisfies that
$$\eta(\evol_{H, \Fext})=0\, .$$
From Proposition~\ref{proposition:equivalent_defs_almost_cosymp}, it is clear that the isomorphism $\flat_{(\omega,\, \eta)}\colon \X(M)\to \Omega^1(M)$ can be decomposed into the morphisms $X\mapsto \eta(X) \eta$ and $X\mapsto \contr{X} \omega$, which have ranks $1$ and $2n$, respectively. This implies the following.

\begin{proposition}
   The evolution vector field $\evol_{H, \Fext}$ of $H$ subject to external forces $\Fext$ is the unique vector field over $M$ such that
   $$\evol_{H,\Fext}\in \ker \eta\, , \quad \contr{\evol_{H,\Fext}}\omega=\dd H+\eta-\Fext\, .$$
   Consequently,
   $$\liedv{\evol_{H,\Fext}} \omega =\dd\eta-\dd\Fext\, , \quad \liedv{\evol_{H,\Fext}}\eta=\contr{\evol_H}\dd\eta\, .$$
\end{proposition}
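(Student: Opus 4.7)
The plan is to deduce both displayed conditions from the original defining identity~\eqref{EvolutionFieldAdiabaticallyClosedSimple} and then invoke the bundle isomorphism $\flat_{(\omega,\,\eta)}$ from Proposition~\ref{proposition:equivalent_defs_almost_cosymp} to obtain uniqueness. The Lie derivative formulas then drop out of Cartan's magic formula.

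The main step is to verify that $\eta(\evol_{H,\Fext})=0$. I would rewrite the defining equation as
\[
\contr{\evol_{H,\Fext}}\omega + \eta(\evol_{H,\Fext})\,\eta = \dd H + \eta - \Fext\,,
\]
and contract both sides with the Reeb vector field $\Reeb$. The first summand vanishes by $\contr{\Reeb}\omega=0$, reducing the identity to $\eta(\evol_{H,\Fext}) = \Reeb(H) + 1 - \Fext(\Reeb)$. Using the specific form $\eta = -(\partial H/\partial S)\,\dd S - \Ffr$ from~\eqref{eq:local_forms_Hamiltonian}, a short coordinate computation yields $\Reeb = -(\partial H/\partial S)^{-1}\,\partial/\partial S$, so $\Reeb(H)=-1$, while the semibasicity of $\Fext$ forces $\Fext(\Reeb)=0$. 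This gives $\eta(\evol_{H,\Fext})=0$, and substituting back produces $\contr{\evol_{H,\Fext}}\omega = \dd H + \eta - \Fext$, establishing both conditions (they can also be read off directly from~\eqref{eq:evol_H_F_coords}).

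For uniqueness, any vector field $X\in\X(M)$ satisfying $\eta(X)=0$ and $\contr{X}\omega = \dd H + \eta - \Fext$ automatically satisfies $\flat_{(\omega,\,\eta)}(X) = \contr{X}\omega + \eta(X)\eta = \dd H + \eta - \Fext$, and since $\flat_{(\omega,\,\eta)}$ is a vector bundle isomorphism, $X = \evol_{H,\Fext}$.

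The Lie derivative identities follow from Cartan's magic formula. Since $\omega$ is closed on a partially cosymplectic manifold and $\dd^2 H = 0$,
\[
\liedv{\evol_{H,\Fext}}\omega = \contr{\evol_{H,\Fext}}\dd\omega + \dd\contr{\evol_{H,\Fext}}\omega = \dd(\dd H + \eta - \Fext) = \dd\eta - \dd\Fext\,,
\]
while the kernel condition $\contr{\evol_{H,\Fext}}\eta=0$ kills the exact term in Cartan's formula applied to $\eta$, giving $\liedv{\evol_{H,\Fext}}\eta = \contr{\evol_{H,\Fext}}\dd\eta$. I do not anticipate any serious obstacle; the only non-formal ingredient is the identity $\Reeb(H)=-1$, which is a direct consequence of the particular shape of $\eta$ dictated by the Hamiltonian thermodynamic setup, rather than a generic feature of partially cosymplectic structures.
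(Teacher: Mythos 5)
Your argument is correct and follows essentially the same route as the paper, which merely asserts that $\eta(\evol_{H,\Fext})=0$ is ``easy to see using the local expressions'' and then invokes the rank decomposition of $\flat_{(\omega,\,\eta)}$ into $X\mapsto\contr{X}\omega$ and $X\mapsto\eta(X)\eta$; your contraction with $\Reeb$, together with the observations $\Reeb(H)=-1$ and $\Fext(\Reeb)=0$ (semibasicity), is a cleaner, more intrinsic way of carrying out that verification, and your uniqueness and Cartan-formula steps are exactly the intended ones.

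One point deserves attention: for the last identity you correctly derive
\[
\liedv{\evol_{H,\Fext}}\eta=\contr{\evol_{H,\Fext}}\dd\eta\,,
\]
but the proposition as stated has $\contr{\evol_H}\dd\eta$ on the right-hand side, with the \emph{unforced} evolution vector field $\evol_H$. Since $\evol_{H,\Fext}=\evol_H-Z_{\Fext}$ with $Z_{\Fext}=-\Fext_i\,\partial/\partial p_i$, the two expressions differ by
\[
\contr{Z_{\Fext}}\dd\eta=\Fext_j\left(\frac{\partial^2 H}{\partial p_j\partial S}\,\dd S+\frac{\partial \Ffr_i}{\partial p_j}\,\dd q^i\right),
\]
which does not vanish for a general Hamiltonian and friction force. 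So what you prove is the (correct) Cartan identity, while the displayed statement appears to contain a typo (or implicitly assumes $\Fext\equiv 0$); you should flag this discrepancy rather than pass over it silently.
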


\subsubsection*{Infinitesimal Symmetries}

Let $X$ be a vector field on $Q$, and denote by $X^C$ and $X^V$ its complete and vertical lifts, respectively, to the tangent bundle $\T Q$. We will denote by the same symbols their trivial extensions to $\T Q \times \RR$. If $(q^i)$ are local coordinates in $Q$, which induce bundle coordinates $(q^i, v^i, S)$ on $\T Q \times \RR$, and
$$
X = X^i \, \frac{\partial}{\partial q^i}
$$
then
$$
X^C = X^i \, \frac{\partial}{\partial q^i} + v^j \frac{\partial X^i}{\partial q^j} \, \frac{\partial}{\partial v^i}\, , \quad
X^V = X^i \, \frac{\partial}{\partial v^i}\, .
$$
See \cite{yano-ishihara,de2011methods} for the intrinsic definitions and additional details.

\begin{theorem}[Noether' s theorem]\label{NoetherI}
    Let $(\T Q\times \RR, \omega_L, \eta_L, L, \Ftext, \Ftfr)$ be a Lagrangian thermodynamic system.
    Let $X$ be a vector field on $Q$. Then,
    \begin{equation}\label{condition}
    X^C(L) = - (\Ftfr+\Ftext)(X^C)
    \end{equation} 
   if and only if $X^V(L)$ is a conserved quantity.
\end{theorem}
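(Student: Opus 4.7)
The plan is to reduce the biconditional to a single pointwise identity along the flow of $\sode_{(L,\,\Ftext)}$, namely
\begin{equation*}
    \sode_{(L,\,\Ftext)}\bigl(X^V(L)\bigr) \;=\; X^C(L) + (\Ftfr+\Ftext)(X^C).
\end{equation*}
Once this identity is available, the theorem is immediate: since $\sode_{(L,\,\Ftext)}$ generates the dynamics, $X^V(L)$ is conserved along every solution if and only if the left-hand side vanishes identically on $\T Q\times\RR$, which is precisely the symmetry condition~\eqref{condition}.

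To establish the identity, I would work in bundle coordinates $(q^i,v^i,S)$. Starting from $X^V(L) = X^i\,\pder{L}{v^i}$ and using that $\sode_{(L,\,\Ftext)}$ is a SODE, so its integral curves satisfy $\dd q^i/\dd t = v^i$, the chain rule produces a term $v^j\pder{X^i}{q^j}\pder{L}{v^i}$ and the term $X^i\,\tfrac{\dd}{\dd t}\pder{L}{v^i}$. The first thermodynamic Euler--Lagrange equation in~\eqref{eq:thermodynamic_Euler_Lagrange} then lets me substitute $\tfrac{\dd}{\dd t}\pder{L}{v^i} = \pder{L}{q^i} + \Ftfr_i + \Ftext_i$ into the second term. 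Regrouping, the Lagrangian contributions $X^i\pder{L}{q^i} + v^j\pder{X^i}{q^j}\pder{L}{v^i}$ recombine into $X^C(L)$ via the coordinate expression of the complete lift recalled in the excerpt, while the force contribution $X^i(\Ftfr_i+\Ftext_i)$ collects into $(\Ftfr+\Ftext)(X^C)$ because $\Ftfr$ and $\Ftext$ are semibasic and $\dd q^i(X^C)=X^i$.

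I do not anticipate any serious obstacle. The entropy Euler--Lagrange equation (the second equation of~\eqref{eq:thermodynamic_Euler_Lagrange}) never enters the computation, since neither $X^C$ nor $X^V$ differentiates in the $S$-direction; its role is only to ensure that $\sode_{(L,\,\Ftext)}$ is a genuine SODE on $\T Q\times \RR$. The only minor point to verify carefully is that the trivial extensions of $X^C$ and $X^V$ to $\T Q\times\RR$ are handled consistently, so that no spurious $\pder{L}{S}$-terms appear. An alternative, fully intrinsic, route would exploit the identity $\contr{X^C}\theta_L = X^V(L)$ (which follows from $\Sendo(X^C)=X^V$ and~\eqref{eq:contraction_theta_endomorphism}) together with Cartan's magic formula and the defining relation $\contr{\sode_{(L,\,\Ftext)}}\omega_L = \dd E_L + \eta_L - \Ftext$; however, the coordinate computation above is shorter and keeps the role of each Euler--Lagrange equation transparent.
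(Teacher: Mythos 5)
Your proposal is correct and follows essentially the same route as the paper's own proof: a coordinate computation of $\tfrac{\dd}{\dd t}\bigl(X^V(L)\bigr)$ along an integral curve of $\sode_{(L,\,\Ftext)}$, substituting the first thermodynamic Euler--Lagrange equation and regrouping into $X^C(L) + (\Ftfr+\Ftext)(X^C)$. Your additional remarks on the irrelevance of the entropy equation and on the intrinsic alternative via $\contr{X^C}\theta_L = X^V(L)$ are accurate but not needed.
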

\begin{proof}
    We will prove the result using local coordinates. Along an integral curve $\gamma(t)=(q^i(t), v^i(t), S(t))$ of $\sode_{(L,\, \Ftext)}$, we have that
    \begin{align*}
        \frac{\dd}{\dd t}\left(X^V(L)\right)&=
        \frac{\partial X^i}{\partial q^j} v^j\frac{\partial L}{\partial v^i} + X^i \frac{\dd}{\dd t} \frac{\partial L}{\partial v^i} =  \frac{\partial X^i}{\partial q^j} v^j\frac{\partial L}{\partial v^i} + X^i \left(\pder{L}{q^i}+\Ftfr_i+\Ftext_i\right)\\
        &= X^C(L) + (\Ftfr+\Ftext)(X^C)\, ,
    \end{align*}
    where we have used the first equation in \eqref{eq:thermodynamic_Euler_Lagrange} in the second step.
\end{proof}

\noindent We may also consider infinitesimal symmetries on $\cT Q \times \RR$ which are not lifted from $Q$.

\begin{proposition}
    Let $(\cT Q \times \RR, \omega, \eta, H, \Fext, \Ffr)$ be a Hamiltonian thermodynamic system.
    Let $X$ be a vector field on $\cT Q \times \RR$ such that $\liedv{X} \theta$ is exact (where $\theta$ is the pullback of the canonical one-form on $\cT Q$).
    Then, for any function $f\in \Cinfty(\cT Q \times \RR)$ such that $\dd f = \liedv{X} \theta$, the following statements are equivalent:
    \begin{enumerate}
        \item $g\coloneqq f - \theta(X)$ is a conserved quantity,
        \item $X(H) + \eta (X) - \Fext(X)=0$.
    \end{enumerate}
\end{proposition}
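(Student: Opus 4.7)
The plan is to translate the hypothesis on $\liedv{X}\theta$ into a statement about $\contr{X}\omega$ via Cartan's magic formula, and then use the characterization of $\evol_{H,\Fext}$ supplied by the preceding proposition to read off the equivalence.

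First, I would record the sign convention: since $\theta$ is the pullback of the canonical Liouville one-form on $\cT Q$ and $\omega$ is the pullback of the associated symplectic form, one has $\omega = -\dd\theta$. Cartan's magic formula then yields
\[
\liedv{X}\theta = \contr{X}\dd\theta + \dd(\contr{X}\theta) = -\contr{X}\omega + \dd\big(\theta(X)\big)\, .
\]
Consequently, whenever $\liedv{X}\theta = \dd f$, the difference $g = f - \theta(X)$ satisfies
\[
\contr{X}\omega = \dd\big(\theta(X)\big) - \dd f = -\dd g\, .
\]
This identity is the bridge between the hypothesis and the dynamics.

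Next, I would differentiate $g$ along $\evol_{H,\Fext}$. Using the identity just derived,
\[
\evol_{H,\Fext}(g) = \dd g(\evol_{H,\Fext}) = -(\contr{X}\omega)(\evol_{H,\Fext}) = \omega(\evol_{H,\Fext}, X) = (\contr{\evol_{H,\Fext}}\omega)(X)\, .
\]
By the proposition stated immediately before the one to be proved, $\contr{\evol_{H,\Fext}}\omega = \dd H + \eta - \Fext$, so
\[
\evol_{H,\Fext}(g) = X(H) + \eta(X) - \Fext(X)\, .
\]

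Finally, the equivalence is immediate: $g$ is a conserved quantity exactly when $\evol_{H,\Fext}(g) = 0$, which by the displayed identity is equivalent to $X(H) + \eta(X) - \Fext(X) = 0$. There is no serious obstacle here; the only care required is to track the sign $\omega = -\dd\theta$ correctly and to invoke the characterization of $\evol_{H,\Fext}$ via $\omega$ from the preceding proposition rather than attempting to compute in local coordinates.
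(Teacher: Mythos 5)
Your proof is correct and follows essentially the same route as the paper's: Cartan's magic formula converts the hypothesis into $\contr{X}\omega = -\dd g$, and contracting with $\evol_{H,\Fext}$ yields the equivalence. The only (cosmetic) difference is that you invoke the characterization $\contr{\evol_{H,\Fext}}\omega = \dd H + \eta - \Fext$ from the preceding proposition, which lands directly on condition (ii), whereas the paper uses $\contr{\evol_{H,\Fext}}\omega = \dd H - \Reeb(H)\eta - \Fext$ and leaves the identity $\Reeb(H) = -1$ implicit.
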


\begin{proof}
    By Cartan's magic formula, $\liedv{X} \theta = \dd f$ if and only if
    $$\contr{X} \omega = - \contr{X} \dd \theta = \dd \left(\contr{X} \theta - f\right) = -\dd g\, .$$
    Contracting both sides with the evolution vector field $\evol_{H, \Fext}$ of $H$ subject to external forces $\Fext$ yields
    $$\liedv{\evol_{H, \Fext}} g = \contr{X} \contr{\evol_{H, \Fext}}  \omega_Q = X(H) - \Reeb(H) \eta(X) - \Fext(X)\, ,$$
    where we have used equation~\eqref{eq:forced_evolution_contractions} in the last step. In particular, the left-hand side vanishes if and only if $g$ is preserved along the flow of $\evol_{H, \Fext}$.
\end{proof}

\noindent The Lagrangian counterpart of the proposition above is as follows:
\begin{proposition}\label{proposition:Cartan_symmetries_Lagrangian}
    Let $(\T Q \times \RR, \omega_L, \eta_L, L, \Ftext, \Ftfr)$ be a Lagrangian thermodynamic system.
    Let $X$ be a vector field on $\T Q \times \RR$ such that $\liedv{X} \theta_L$ is exact, where $\theta_L=\Leg^\ast \theta$. Then, for any function $f\in \Cinfty(\T Q \times \RR)$ such that $\dd f = \liedv{X} \theta_L$, the following statements are equivalent:
    \begin{enumerate}
        \item $g\coloneqq f - \theta_L(X)$ is a conserved quantity,
        \item $X(E_L) + \eta_L (X) - \Ftext(X)=0$.
    \end{enumerate}
\end{proposition}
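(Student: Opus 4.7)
The plan is to mirror the proof of the preceding Hamiltonian proposition, replacing the data $(\theta, \omega, \eta, H, \evol_{H,\Fext})$ by its Lagrangian counterpart $(\theta_L, \omega_L, \eta_L, E_L, \sode_{(L,\Ftext)})$. First I would apply Cartan's magic formula to the hypothesis $\liedv{X}\theta_L = \dd f$ and use $\omega_L = -\dd \theta_L$ to rewrite this as $\contr{X}\omega_L = -\dd g$, where $g = f - \theta_L(X)$. Contracting both sides with $\sode_{(L,\Ftext)}$ and using the antisymmetry of $\omega_L$ then yields
$$\liedv{\sode_{(L,\Ftext)}} g = \contr{X}\contr{\sode_{(L,\Ftext)}}\omega_L.$$

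The next step is to establish the Lagrangian analogue of identity~\eqref{eq:forced_evolution_contractions}, namely
$$\contr{\sode_{(L,\Ftext)}}\omega_L = \dd E_L - \Reeb(E_L)\,\eta_L - \Ftext,$$
where $\Reeb$ denotes the Reeb vector field of $(\omega_L,\eta_L)$. This follows by decomposing the defining equation $\flat_{(\omega_L,\eta_L)}(\sode_{(L,\Ftext)}) = \dd E_L + \eta_L - \Ftext$ along the direct sum~\eqref{eq:decomposition_cotangent}, exactly as on the Hamiltonian side. Substituting into the previous display produces
$$\liedv{\sode_{(L,\Ftext)}} g = X(E_L) - \Reeb(E_L)\,\eta_L(X) - \Ftext(X).$$

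To match condition (ii) verbatim, it remains to check that $\Reeb(E_L) = -1$; this is the one place where the argument is slightly more delicate than on the Hamiltonian side, and I expect it to be the main bookkeeping obstacle. The cleanest route I see is to transport $\Reeb$ through the Legendre diffeomorphism: since $\omega_L = \Leg^{\ast}\omega$ and $\eta_L = \Leg^{\ast}\eta$, the Reeb field of $(\omega_L,\eta_L)$ is $\Leg$-related to the Reeb field of $(\omega,\eta)$, so $\Reeb(E_L) = \Reeb(H\circ \Leg)$ reduces to evaluating the Hamiltonian Reeb field on $H$ and pulling back by $\Leg$. From the local expression~\eqref{eq:local_forms_Hamiltonian} one reads $-\left(\pder{H}{S}\right)^{-1}\pder{}{S}$ for that vector field, whose action on $H$ is identically $-1$, so $\Reeb(E_L) = -1$. (A direct computation, solving $\contr{\Reeb}\omega_L = 0$ and $\contr{\Reeb}\eta_L = 1$ in bundle coordinates $(q^i, v^i, S)$ and applying the resulting $\Reeb$ to $E_L = v^i\pder{L}{v^i} - L$, yields the same conclusion.) With $\Reeb(E_L) = -1$ in hand, $g$ is preserved along the flow of $\sode_{(L,\Ftext)}$ if and only if $X(E_L) + \eta_L(X) - \Ftext(X) = 0$, which is the stated equivalence between (i) and (ii).
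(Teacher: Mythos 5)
Your proof is correct and follows the route the paper intends: the paper states this proposition without proof, as the verbatim Lagrangian transcription of the preceding Hamiltonian argument, which is exactly what you carry out. Your extra care in verifying $\Reeb(E_L)=-1$ via the Legendre transform fills in the one bookkeeping step the paper leaves implicit (there, $\Reeb(H)=-1$ is hidden in the identity $\eta(\evol_{H,\Fext})=0$ and equation~\eqref{eq:forced_evolution_contractions}), and is consistent with the local expressions given in the text.
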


\noindent Using the identity \eqref{eq:contraction_theta_endomorphism}, we can also express the conserved quantity $g$ above as 
$$g= f - (\Sendo \circ X) (L)\, ,$$
with $\Sendo$ is the trivial lift to $\T Q \times \RR$ of the canonical vertical endomorphism on $\T Q$.

\begin{example}[A cylinder with two pistons]
    Consider a cylinder with a piston on each end, filled with an ideal gas. The configuration space of the system is $Q=\RR^2$, with canonical coordinates $(x, y)$. Let $(x, y, v_x, v_y, S)$ be the induced coordinates on $\T Q \times \RR$.
    In suitable units and for a suitable number of particles in the gas, the Lagrangian that describes this system is given by
    \begin{equation*}
        L(x, y, v_x, v_y, S)=\frac{v_x^2}{2} + \frac{v_y^2}{2}-e^S(x+y)^{-\frac{1}{c}},
    \end{equation*}
    where $c=\frac{3}{2}$ is the molar specific heat capacity at constant volume divided by $R$, the constant of ideal gases, for a monoatomic gas. We also consider a friction force between the cylinder and the pistons, which is proportional to its velocity. In this case, we consider a constant temperature independent coefficient of friction $\gamma$:
    \begin{equation*}
        \Ftfr=-\gamma v_x \dd x - \gamma v_y \dd y\, ,
    \end{equation*}
    and no external forces, namely, $\Ftext \equiv 0$.
    The Lagrangian energy $E_L = \Delta(L)-L$ reads
    \begin{equation*}
        E_L(x, y, v_x, v_y, S)=\frac{v_x^2}{2} + \frac{v_y^2}{2}+e^S(x+y)^{-\frac{1}{c}}\, ,
    \end{equation*}
    the one-form $\theta_L$ is given by
    $$\theta_L=\frac{\partial L}{\partial v_x} \dd x + \frac{\partial L}{\partial v_y} \dd y = v_x \dd x + v_y \dd y\, ,$$
    and the one-form $\eta_L$ is given by
    $$\eta_L=\frac{\partial L}{\partial S} \dd S-\Ftfr = \gamma v_x \dd x + \gamma v_y \dd y -e^S(x+y)^{-\frac{1}{c}}\dd S \, .$$
    Consider the vector field 
    $$X = \partial_y - \partial_x + \gamma \partial_{v_x} - \gamma \partial_{v_y}\, .$$
    Then, we have
    $$\Sendo \circ X = \partial_{v_y} - \partial_{v_x}\, .$$
    Thus,
    $$\theta_L(X) = (\Sendo \circ X) (L) = v_y - v_x\, .$$
    Observe that
    $$X(E_L) = \gamma \left(v_x - v_y\right) = - \eta_L(X)\, ,$$
    and
    $$\liedv{X} \theta_L = \gamma \dd x - \gamma \dd y = \dd f\, , \quad f \coloneqq \gamma(x-y) \, .$$
    Hence, by Proposition~\ref{proposition:Cartan_symmetries_Lagrangian}, 
    the function
    $$g\coloneqq f - \theta_L(X) = \gamma(x-y) + v_y -v_x$$
    is a conserved quantity.
    
\end{example}

\section[Discrete model]{Discrete model for adiabatically closed simple \\ thermodynamic systems}\label{sec:discrete_model}

\subsection{Discrete variational principle}\label{sec:discrete_variational_principle}

In this section, we will construct a discrete model inspired by the continuous one presented in the previous section. For simplicity's sake, we will first consider the case without external forces. Discrete external forces will be added at the end of the present subsection.

Consider a simple thermodynamic system whose configuration manifold is $Q$ and let $S\in \RR$ be its entropy. A $\Cinfty$ function $L_d:Q\times Q\times \RR\longrightarrow \RR$ will be called a \emph{discrete Lagrangian}. Consider a $1$-form $\ffrd=\parent{\ffrdm,\ffrdp,0}\in \Omega^1(Q\times Q\times \RR)$ that will represent the discrete version of the friction forces considered above. Here we are using the natural identification $\cT(Q\times Q\times \RR) \cong \cT Q \oplus \cT Q \oplus \cT \RR$.
The triple $\parent{Q,L_d,\ffrd}$ will be called a \emph{simple discrete thermodynamic system}. We will assume that $\DD_S L_d\neq 0$ at every point, where $\DD_S L_d$ denotes the derivative with respect to the real variable $S$ of the function $L_d$.
We will regard $\DD_S L_d$ as a discrete approximation to the temperature of the system (\textit{cf}.~Remark~\ref{remark:temperature}).
Since the third principle of thermodynamics assures that no system can reach zero absolute temperature, this assumption is reasonable.

Let $N\in\NN$ and $\{t_k=hk\mid k=0,\ldots,N\}$ be an increasing sequence that represents the sequence of times. Hereinafter, $h$ will be called the \emph{time step}. 

\begin{definition}
    Under the previous conditions we define the \emph{total discrete path space} as the set
    $$
    \Omega\parent{\{t_k\}_{k=0}^N}=\left\{\sigma:\{t_k\}_{k=0}^N\longrightarrow Q\times \RR\right\}.
    $$
    We will denote $\sigma(t_k)=(q_k,S_k)$, $k=0,\ldots, N$. The \emph{discrete thermodynamic path space} is the subset of $\Omega\parent{\{t_k\}_{k=0}^N}$ given by
    \begin{align*}
     &  \Omega_T\parent{\{t_k\}_{k=0}^N} =
    \Bigg\{\sigma\in  \Omega\parent{\{t_k\}_{k=0}^N}\, \mid  \\
    &\qquad S_k-S_{k-1}=\frac{\ffrdp \parent{q_{k-1},q_k,S_{k-1}}q_{k}-\ffrdm \parent{q_{k-1},q_k,S_{k-1}}q_{k-1}}{\DD_S L_d\parent{q_{k-1},q_k,S_{k-1}}}, 1\leq k\leq N \Bigg\},
    \end{align*}
    where
    $$\ffrdpm  q_k=\sum_{i=1}^n f_{d,i}^{\mathrm{fr},\pm}q_k^i\, ,$$
    with $(q_k^i)$ the coordinates of the point $q_k\in Q$ on a certain chart $(q^i)$, in which the forces read
    $$\ffrdpm   = \sum_{i=1}^n f_{d,i}^{\mathrm{fr},\pm} \dd q^i\, .$$
    Given two fixed points $q^*_0,q^*_N\in Q$, the discrete thermodynamic path space from $q_0^*$ to $q_N^*$ is the set
    $$
    \Omega_T\parent{q^*_0,q^*_N,\{t_k\}_{k=0}^N}=\left\{\sigma\in  \Omega_T\parent{\{t_k\}_{k=0}^N}\mid q_0=q_0^*, q_N=q_N^* \right\}.
    $$
\end{definition}

\begin{remark}
    Notice that the equations defining the discrete thermodynamic path space are a discretization of the phenomenological constraints used in \cite{G.Y2018}. We should also observe that, in general, they depend on the coordinate chart chosen, since $\ffrdpm  q_k$ does. However, if $Q$ is a vector space, we can choose linear coordinates and take $\ffrdp =\ffrdm$. In that case, the definition above becomes coordinate-independent. 
\end{remark}

The total discrete path space $\Omega\parent{\{t_k\}_{k=0}^N}$ and the discrete thermodynamic path space $\Omega_T\parent{\{t_k\}_{k=0}^N}$ are diffeomorphic to the product manifolds $Q^N\times \RR^N$ and $Q^N\times\RR$, respectively. Thus, their tangent spaces can be identified with the Whitney sums $(\T Q)^N\oplus (\T\RR)^N$ and $(\T Q)^N\oplus \T \RR$, respectively. An element $\delta \sigma$ of $\T \Omega\parent{\{t_k\}_{k=0}^N}$ is called a \emph{variation}.

\begin{definition}
    We define the \emph{discrete action} of the thermodynamic system as the map
    \begin{align*}
        \mathcal{A}_d:\Omega\parent{\{t_k\}_{k=0}^N}&\longrightarrow \RR\\
        \sigma&\longmapsto \mathcal{A}_d(\textbf{q})=\sum_{k=1}^NL_d\parent{q_{k-1},q_k,S_{k-1}} \, ,
    \end{align*}
    where $\sigma (\{t_k\})=\parent{q_k,S_k}$ for every $k=0,\ldots, N$.
\end{definition}

\begin{theorem}\label{theorem:discrete_EL_map}
    Given a simple discrete thermodynamic system $\parent{Q,L_d,\ffrd}$, we define the $1$-form
    $$
    \eta_d=\parent{\frac{1}{2}\ffrdm ,\frac{1}{2}\ffrdp ,-\DD_S L_d}\in \Omega^1(Q\times Q\times \RR)\, ,
    $$
    and consider the distribution $\mathcal{D}$ on $\Omega\parent{\{t_k\}_{k=0}^N}$ given by
    $$
    \mathcal{D}=\{\delta \sigma =   \parent{\delta q_0,\ldots, \delta q_N,\delta S_0,\ldots, \delta S_N}_{(q_0,\ldots,q_N,S_0,\ldots,S_N)}\in \T\Omega\parent{\{t_k\}_{k=0}^N}\; \mid
    $$
    $$\eta_d\parent{q_{k-1},q_k,S_{k-1}}\parent{\delta q_{k-1},\delta q_k,\delta S_{k-1}}=0, k=1,\ldots, N\}.
    $$
    There exists a unique map $\DD_{\mathrm{DEL}}L_d:Q^3\times\RR^2\longrightarrow \cT Q$ and unique 1-forms on $Q\times Q\times \RR$, $\Theta^+_{L_d},\; \Theta^-_{L_d}$ such that, for all variations on $\delta \sigma \in \mathcal{D}$, we have
    \begin{equation}\label{DiferencialAccionTotal}
    \begin{aligned}
         \dd\mathcal{A}_d(\delta \sigma)& =\sum_{k=1}^{N} \DD_{\mathrm{DEL}}L_d\parent{q_{k-1},q_k,q_{k+1},S_{k-1},S_k} \delta q_k 
         \\ & \quad -\Theta^-_{L_d}\parent{q_0,q_1,S_0}\parent{\delta q_0,\delta q_1,\delta S_0}
         \\ & \quad +\Theta^+_{L_d}\parent{q_{N-1},q_N,S_{N-1}}\parent{\delta q_{N-1},\delta q_N,\delta S_{N-1}},
    \end{aligned}
    \end{equation}
    where we identify $\delta \sigma= \parent{\delta q_0,\ldots, \delta q_N,\delta S_0,\ldots, \delta S_N}\in (\T Q)^N\oplus (\T \RR)^N$.
    These objects have the following coordinate expressions:
    \begin{align*}
        & \DD_{\mathrm{DEL}}L_d\parent{q_{k-1},q_k,q_{k+1},S_{k-1},S_k}
        =\Big(\DD_1L_d\parent{q_k,q_{k+1},S_k}+\frac{1}{2}\ffrdm\parent{q_k,q_{k+1},S_k}\\
        & \qquad \qquad
        +\DD_2L_d\parent{q_{k-1},q_{k},S_{k-1}}+\frac{1}{2}\ffrdp\parent{q_{k-1},q_{k},S_{k-1}}\Big)\cdot \dd q\, ,\\
        & \Theta^-_{L_d}\parent{q_0,q_1,S_0}=-\left(\DD_1L_d\parent{q_0,q_{1},S_0}+\frac{1}{2}\ffrdm\parent{q_0,q_{1},S_0}\right)\cdot \dd q_0\, ,\\
        & \Theta^+_{L_d}\parent{q_{N-1},q_N,S_{N-1}}=\left(\DD_2L_d\parent{q_{N-1},q_{N},S_{N-1}}+\frac{1}{2}\ffrdp\parent{q_{N-1},q_{N},S_{N-1}}\right)\cdot \dd q_N\, .
    \end{align*}
\end{theorem}
\begin{proof}
    Identifying $\Omega\parent{\{t_k\}_{k=0}^N}$ with $(\T Q)^N\oplus (\T\RR)^N$, the differential of the discrete action is given by
    \begin{align*}
        \dd\mathcal{A}_d&=\sum_{k=1}^N \Big(\DD_1L_d\parent{q_{k-1},q_k,S_{k-1}}\dd q_{k-1}+\DD_2L_d\parent{q_{k-1},q_k,S_{k-1}}\dd q_{k}\\
        & \quad +\DD_S L_d\parent{q_{k-1},q_k,S_{k-1}}\dd S_{k-1}\Big),
    \end{align*}
    where $\DD_1$ and $\DD_2$ denote the exterior differential with respect to the first and second variables of the product manifold, respectively. When restricted to acting over variations on $\mathcal{D}$, for every $k=1, \ldots, N$ we obtain that
    \begin{equation}\label{1formaDiscreta}  
    \DD_S L_d\parent{q_{k-1},q_k,S_{k-1}}\dd S_{k-1}=\frac{1}{2}\parent{\ffrdm\parent{q_{k-1},q_k,S_{k-1}}\dd q_{k-1}+\ffrdp\parent{q_{k-1},q_k,S_{k-1}}\dd q_{k}}.   
    \end{equation}
    Thus,
    $$
    \restr{\dd\mathcal{A}_d}{\mathcal{D}}=\sum_{k=1}^N \bigg(\DD_1L_d\parent{q_{k-1},q_k,S_{k-1}}\dd q_{k-1}+\DD_2L_d\parent{q_{k-1},q_k,S_{k-1}}\dd q_{k}+
    $$
    $$
    \left.+\frac{1}{2}\ffrdm\parent{q_{k-1},q_k,S_{k-1}}\dd q_{k-1}+\frac{1}{2}\ffrdp\parent{q_{k-1},q_k,S_{k-1}}\dd q_{k}\right).
    $$
    Grouping the terms accompanying each $\dd q_{k}$, we get the result.
\end{proof}

It is noteworthy that, under the assumption that $\DD_S L_d\neq 0$ at the points of the discrete path, $\{\dd q_{0},\ldots, \dd q_{N}\}$ is a basis at every point of the dual space of $\mathcal{D}|_{(q_0,\ldots,q_N,S_0,\ldots,S_N)}$. 
It is also worth recalling that each $q_k$ is a point in $Q$ and not a coordinate. In general, $Q$ will be an $n$-dimensional manifold so that, for each of its copies, we will have coordinates that will be denoted by $q_k^i$, $i=1,\ldots, n$, $k=0,\ldots, N$. The same can be said for the $1$-forms $\dd q_{k}$ and the variations $\delta q_k$. Nonetheless, each $S_k$ is a real variable.

\begin{remark}
    Equation \eqref{1formaDiscreta} can be interpreted as the discretization of the variational constraint introduced in \cite{G.Y2018}, via the identification
    $$
    \dd q\approx  \frac{\dd q_{k-1}+\dd q_k}{2}.
    $$
\end{remark}

\begin{definition}
    A discrete thermodynamic path $\sigma\in \Omega_T\parent{\{t_k\}_{k=0}^N}$ is said to be a \emph{solution of the discrete thermodynamic Euler--Lagrange equations} if for all variations $\delta\sigma\in \mathcal{D}_\sigma$ with fixed extreme points, i.e., $\delta q_0=0=\delta q_N$, we have
    $$
    \dd\mathcal{A}_d\cdot \delta \sigma  =0.
    $$
\end{definition}

Using the previous theorem, we can conclude the following characterization of the solutions of the discrete thermodynamic Euler--Lagrange equations.

\begin{corollary}
    A discrete path $\sigma\in \Omega\parent{\{t_k\}_{k=0}^N}$ is a solution of the discrete thermodynamic Euler--Lagrange equations if an only if it satisfies
    \begin{subequations}\label{eq:EulerLagrangeDiscretas}
    \begin{equation} \label{EulerLagrangeTermoDiscreta}
     \DD_1L_d\parent{q_k,q_{k+1},S_k}+\frac{1}{2}\ffrdm\parent{q_k,q_{k+1},S_k}+\DD_2L_d\parent{q_{k-1},q_{k},S_{k-1}}+\frac{1}{2}\ffrdp\parent{q_{k-1},q_{k},S_{k-1}}=0,   
    \end{equation}
    and 
    \begin{equation}\label{EntropiaDiscreta}
        S_k=S_{k-1}+\frac{\ffrdp \parent{q_{k-1},q_k,S_{k-1}}q_{k}-\ffrdm \parent{q_{k-1},q_k,S_{k-1}}q_{k-1}}{\DD_S L_d\parent{q_{k-1},q_k,S_{k-1}}},
    \end{equation}
    for every $k=1,\ldots, N$.
    \end{subequations}
\end{corollary}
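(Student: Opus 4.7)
The corollary is a direct unpacking of the definition of solution combined with the formula from the preceding theorem. My plan has three steps.

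First, I would note that the condition $\sigma \in \Omega_T\parent{\{t_k\}_{k=0}^N}$ is, by construction, precisely the assertion that \eqref{EntropiaDiscreta} holds for every $k=1,\dots,N$; hence the entropy-update half of the characterization is automatic. If $\sigma\notin\Omega_T$, then $\sigma$ is not a solution by definition, and \eqref{EntropiaDiscreta} fails, so both sides of the ``iff'' are false. It therefore suffices to work with $\sigma\in\Omega_T$ and identify those for which $\dd\mathcal{A}_d\cdot\delta\sigma=0$ for all $\delta\sigma\in\mathcal{D}_\sigma$ with $\delta q_0=0=\delta q_N$.

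Second, I would apply formula \eqref{DiferencialAccionTotal} to such a variation. From the coordinate expressions in the theorem, $\Theta^-_{L_d}(q_0,q_1,S_0)$ is a multiple of $\dd q_0$ alone and $\Theta^+_{L_d}(q_{N-1},q_N,S_{N-1})$ is a multiple of $\dd q_N$ alone; consequently both boundary contributions vanish under the fixed-endpoint condition, leaving
$$\dd\mathcal{A}_d(\delta\sigma)=\sum_{k=1}^{N-1}\DD_{\mathrm{DEL}}L_d\parent{q_{k-1},q_k,q_{k+1},S_{k-1},S_k}\,\delta q_k.$$

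Third, I would verify that $\delta q_1,\dots,\delta q_{N-1}$ are genuinely independent free parameters within $\mathcal{D}_\sigma\cap\{\delta q_0=\delta q_N=0\}$. The defining relation $\eta_d(\delta q_{k-1},\delta q_k,\delta S_{k-1})=0$ unpacks to $\DD_S L_d\cdot\delta S_{k-1}=\frac{1}{2}(\ffrdm\delta q_{k-1}+\ffrdp\delta q_k)$, and since $\DD_S L_d$ is nowhere-vanishing by hypothesis, this uniquely recovers each $\delta S_{k-1}$ from the $\delta q$'s without imposing any further restriction on them ($\delta S_N$ remains entirely free). Hence, for each interior index $k$ I can choose a variation concentrated at $\delta q_k$ alone; the vanishing of the sum forces its coefficient to vanish, which by the theorem's explicit formula for $\DD_{\mathrm{DEL}}L_d$ is exactly \eqref{EulerLagrangeTermoDiscreta}. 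The converse direction is immediate by reversing the argument. I do not anticipate any serious obstacle; the only mildly delicate point, and the one I would take most care with, is this final free-parameter count, which rests entirely on the nondegeneracy $\DD_S L_d\neq 0$ already built into the setup.
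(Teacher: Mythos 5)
Your proposal is correct and follows essentially the same route the paper intends: the entropy equation is just membership in $\Omega_T$, the boundary terms in~\eqref{DiferencialAccionTotal} drop under the fixed-endpoint condition, and the independence of the $\delta q_k$ within $\mathcal{D}_\sigma$ (guaranteed by $\DD_S L_d\neq 0$, which is exactly the remark the paper makes right after the theorem about $\{\dd q_0,\dots,\dd q_N\}$ being a basis of the dual of $\mathcal{D}$) forces each coefficient $\DD_{\mathrm{DEL}}L_d$ to vanish. The paper leaves all of this implicit, so your write-up is simply a more explicit version of the same argument; your careful handling of the free-parameter count, and your tacit correction of the index range to $k=1,\dots,N-1$ for the first equation, are both appropriate.
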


\noindent We shall refer to \eqref{eq:EulerLagrangeDiscretas} as the \emph{discrete thermodynamic Euler--Lagrange equations}.

\bigskip

{
We will now incorporate discrete external forces to the constructions above. Let 
$\parent{Q,L_d,\ffrd}$ be a simple discrete thermodynamic system. 
The discrete version of external forces is represented by a $1$-form $\fextd = \left(\fextdm, \fextdp, 0 \right)\in \Omega^1(Q\times Q \times \RR)$.
The $4$-tuple $\parent{Q,L_d,\ffrd, \fextd}$ will be called a \emph{simple discrete forced thermodynamic system}.



\begin{definition}
    A discrete thermodynamic path $\sigma\in \Omega_T\parent{\{t_k\}_{k=0}^N}$ is called a \emph{solution of the discrete thermodynamic forced Euler--Lagrange equations} if for all variations $\delta\sigma\in \mathcal{D}_\sigma$ with fixed extreme points, i.e., $\delta q_0=0=\delta q_N$, we have
    $$
    \dd\mathcal{A}_d\cdot \delta \sigma + \frac{1}{2}\sum_{k=1}^N\Big( \fextdm\left(q_k, q_{k+1}, S_{k}\right) \cdot \delta q_{k} + \fextdp\left(q_{k}, q_{k+1}, S_{k}\right)  \cdot \delta q_{k+1} \Big) = 0 \, .
    $$
\end{definition}

\noindent
To simplify the expressions, we introduce the notation
$$\fd \coloneqq \ffrd + \fextd\, , \quad \fdpm \coloneqq \ffrdpm + \fextdpm\, .$$
Using Theorem~\ref{theorem:discrete_EL_map}, we can obtain the following characterization of the solutions of the discrete thermodynamic forced Euler--Lagrange equations.

\begin{corollary}
    A discrete path $\sigma\in \Omega\parent{\{t_k\}_{k=0}^N}$ is a solution of the discrete thermodynamic forced Euler--Lagrange equations if an only if it satisfies
    \begin{subequations}\label{eq:EulerLagrangeDiscretasforzadas}
    \begin{equation} \label{EulerLagrangeTermoDiscretaforzada}
     \DD_1L_d\parent{q_k,q_{k+1},S_k}+\frac{1}{2}\fdm\parent{q_k,q_{k+1},S_k}+\DD_2L_d\parent{q_{k-1},q_{k},S_{k-1}}+\frac{1}{2}\fdp\parent{q_{k-1},q_{k},S_{k-1}}=0,   
    \end{equation}
    and 
    \begin{equation}\label{EntropiaDiscretaforzada}
        S_k=S_{k-1}+\frac{\ffrdp \parent{q_{k-1},q_k,S_{k-1}}q_{k}-\ffrdm \parent{q_{k-1},q_k,S_{k-1}}q_{k-1}}{\DD_S L_d\parent{q_{k-1},q_k,S_{k-1}}},
    \end{equation}
    for every $k=1,\ldots, N$.
    \end{subequations}
\end{corollary}

\noindent We shall refer to \eqref{eq:EulerLagrangeDiscretasforzadas} as the \emph{discrete thermodynamic forced Euler--Lagrange equations}.

}

\subsection{Discrete flows}\label{sec:discrete_flows}
Now we define and study maps that allow us to update points on $Q\times Q\times \RR$ so that the corresponding discrete path that will be created is a solution of the discrete thermodynamic Euler--Lagrange equations.

{
\begin{definition}\label{def:discrete_Legendre_transform}
    Let $\parent{Q,L_d,\ffrd, \fextd}$ be a simple discrete forced thermodynamic system.
    Assume that $\DD_S L_d$ does not vanish. The \emph{discrete thermodynamic Legendre transforms} are the maps $\mathbb{F}^{f\pm}L_d:Q\times Q\times \RR\longrightarrow \cT Q\times\RR$ given by
    \begin{align*}
    &\mathbb{F}^{f+}L_d\parent{q_0,q_1,S_0}\\
    &=\Bigg(q_1,\DD_2L_d\parent{q_0,q_1,S_0}+\frac{1}{2}\fdp \parent{q_0,q_1,S_0}, 
    S_0+ \frac{\ffrdp \parent{q_{0},q_1,S_{0}}q_{1}-\ffrdm \parent{q_{0},q_1,S_{0}}q_{0}}{\DD_S L_d\parent{q_{0},q_1,S_{0}}}\Bigg),\\ 
    &\mathbb{F}^{f-}L_d\parent{q_0,q_1,S_0}=\parent{q_0,-\DD_1L_d\parent{q_0,q_1,S_0}-\frac{1}{2}\fdm \parent{q_0,q_1,S_0},S_0}\, ,
    \end{align*}
    where $\fdpm \coloneqq \ffrdpm + \fextdpm$.
    The second components of these maps, multiplied by the time step $h$, will be called the discrete momenta, namely
    \begin{equation}\label{eq:discrete_momenta}
    \begin{aligned}
        &p^+_d(q_0,q_1,S_0)=h\DD_2L_d\parent{q_0,q_1,S_0}+\frac{h}{2}\fdp \parent{q_0,q_1,S_0},\\
        & p^-_d(q_0,q_1,S_0)=-h\DD_1L_d\parent{q_0,q_1,S_0}-\frac{h}{2}\fdm \parent{q_0,q_1,S_0}.
    \end{aligned}
    \end{equation}
\end{definition}
}

\begin{remark}
    Notice that $\mathbb{F}^{f-}L_d$ does not depend on the coordinate chart chosen, but, in general, $\mathbb{F}^{f+}L_d$ depends on it, since $\ffrdp \parent{q_{0},q_1,S_{0}}q_{1}$ and $\ffrdm \parent{q_{0},q_1,S_{0}}q_{0}$ depend on the choice of coordinates. As it was previously stated for the equations defining the discrete thermodynamic path space, if $Q$ is a vector space, we choose linear coordinates and $\ffrdp =\ffrdm $, then the definition is coordinate free.
\end{remark}

These discrete thermodynamic Legendre transforms allows us to characterise the solutions of the discrete thermodynamic Euler--Lagrange equations via momentum matching equations as in \cite{M.W2001}. Indeed, definition~\ref{def:discrete_Legendre_transform} and equations \eqref{eq:EulerLagrangeDiscretas} immediatly imply the following:

\begin{proposition}
    A discrete path $\sigma\in \Omega\parent{\{t_k\}_{k=0}^N}$ is a solution of the discrete thermodynamic Euler--Lagrange equations if an only if, for every $k=1,\ldots, N-1$ we have
    $$
    \mathbb{F}^{f+}L_d\parent{q_{k-1},q_k,S_{k-1}}=\mathbb{F}^{f-}L_d\parent{q_k,q_{k+1},S_k}.
    $$
\end{proposition}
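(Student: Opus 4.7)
The plan is to verify the equivalence componentwise in $\cT Q \times \RR$, whose factors are the base point in $Q$, the momentum fiber in $\cT_{q_k} Q$, and the real entropy variable. By the previous corollary, it suffices to show that, for each interior index $k$, the equality $\mathbb{F}^{f+}L_d(q_{k-1},q_k,S_{k-1}) = \mathbb{F}^{f-}L_d(q_k,q_{k+1},S_k)$ is equivalent to the conjunction of \eqref{EulerLagrangeTermoDiscreta} and \eqref{EntropiaDiscreta} at that $k$.

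First, I would observe that the first components of both sides equal $q_k$, by direct inspection of Definition~\ref{def:discrete_Legendre_transform}. This equality is automatic and contributes no dynamical content. Second, I would read off the fiber components: the second component of $\mathbb{F}^{f+}L_d(q_{k-1},q_k,S_{k-1})$ is $\DD_2 L_d(q_{k-1},q_k,S_{k-1}) + \tfrac12 \ffrdp(q_{k-1},q_k,S_{k-1})$, while the second component of $\mathbb{F}^{f-}L_d(q_k,q_{k+1},S_k)$ is $-\DD_1 L_d(q_k,q_{k+1},S_k) - \tfrac12 \ffrdm(q_k,q_{k+1},S_k)$. Setting these equal and moving everything to one side yields exactly \eqref{EulerLagrangeTermoDiscreta}.

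Third, I would compare the entropy components: the third component of $\mathbb{F}^{f+}L_d(q_{k-1},q_k,S_{k-1})$ is, by construction, the right-hand side of \eqref{EntropiaDiscreta}, while the third component of $\mathbb{F}^{f-}L_d(q_k,q_{k+1},S_k)$ is simply $S_k$; their equality is \eqref{EntropiaDiscreta} at index $k$. Quantifying over $k=1,\ldots,N-1$ and invoking the preceding corollary then gives the claimed characterisation.

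There is no genuine obstacle here; the statement is a direct unpacking of Definition~\ref{def:discrete_Legendre_transform} together with the system \eqref{eq:EulerLagrangeDiscretas}. The only mild bookkeeping point to watch is the index range: the momentum-matching equation involves both $(q_{k-1},q_k,S_{k-1})$ and $(q_k,q_{k+1},S_k)$, so it only makes sense for $k=1,\ldots,N-1$, which matches the range over which the interior Euler--Lagrange equation \eqref{EulerLagrangeTermoDiscreta} is defined, and the entropy updates recovered this way simultaneously witness membership of $\sigma$ in $\Omega_T(\{t_k\}_{k=0}^N)$ at the interior times.
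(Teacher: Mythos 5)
Your proposal is correct and coincides with the paper's argument: the paper offers no written proof beyond asserting that Definition~\ref{def:discrete_Legendre_transform} and equations \eqref{eq:EulerLagrangeDiscretas} immediately imply the claim, and your componentwise comparison (base point trivially $q_k$, fiber component giving \eqref{EulerLagrangeTermoDiscreta}, entropy component giving \eqref{EntropiaDiscreta}) is exactly that implication spelled out. Your closing remark on the index range $k=1,\ldots,N-1$ is a fair observation about an edge the paper itself glosses over, but it does not affect the validity of the argument.
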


\noindent Utilising the inverse function theorem and linear algebra, we can prove the following:

\begin{lemma}
    The discrete thermodynamic Lagrangian transform $\mathbb{F}^{f-}L_d$ is a local diffeomorphism if and only if
    $$
    \DD_2\DD_1L_d+\frac{1}{2}\DD_2\fdm,
    $$
    is regular. In addition, if 
    $$
    \parent{\DD_S L_d}^2+\parent{\DD_S\ffrdp q_2-\DD_S\ffrdm q_1}\DD_SL_D-\parent{\ffrdp q_2-\ffrdm q_1}\DD_S^2L_d\neq 0,
    $$
    then the discrete thermodynamic Lagrangian transform $\mathbb{F}^{f+}L_d$ is a local diffeomorphism if and only if the following matrix is regular:
    {
    \footnotesize{
   \begin{align*}
        & \parent{\DD_1\DD_2L_d+\frac{1}{2}\DD_1\fdp}\parent{\parent{\DD_S L_d}^2+\parent{\DD_S\ffrdp q_2-\DD_S\ffrdm q_1}\DD_SL_D-\parent{\ffrdp q_2-\ffrdm q_1}\DD_S^2L_d}-\\
        & -\parent{\DD_2\DD_SL_d+\frac{1}{2}\DD_S\fdp }\parent{\parent{\parent{\DD_1\ffrdp }q_2-\parent{\DD_1\ffrdm }q_1-\ffrdm }\DD_S L_d-\parent{\ffrdp q_2-\ffrdm q_1}\DD_1\DD_SL_d}\, .
   \end{align*}
    }
    }
\end{lemma}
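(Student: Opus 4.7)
Both statements are instances of the inverse function theorem: the transform $\mathbb{F}^{f\pm}L_d$ is a local diffeomorphism at a point if and only if its Jacobian matrix there is invertible. So the entire problem reduces to computing two determinants and identifying them with the matrices displayed in the statement.

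For $\mathbb{F}^{f-}L_d$ the argument is short. In bundle coordinates $(q_0,q_1,S_0)$ on the source and $(q_0,p_0,S_0)$ on the target, the first and third components of $\mathbb{F}^{f-}L_d$ are simply $q_0$ and $S_0$, and only the middle component depends on $q_1$. Arranging the Jacobian with the $q_0$ and $S_0$ blocks as the first and last block rows produces a block-triangular matrix whose diagonal blocks are the identity of $Q$, the matrix $-\DD_2\DD_1 L_d - \tfrac{1}{2}\DD_2\ffrdm$, and the scalar $1$. Its determinant therefore equals $\det\bigl(-\DD_2\DD_1 L_d - \tfrac{1}{2}\DD_2\ffrdm\bigr)$, which gives the first claim.

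For $\mathbb{F}^{f+}L_d$ write $\tilde S = S_0 + A/B$ with $A=\ffrdp q_1 - \ffrdm q_0$ and $B=\DD_S L_d$. The first component of $\mathbb{F}^{f+}L_d$ is $q_1$, so reordering the source coordinates so that $q_1$ is first produces a Jacobian of the block form
\[
\begin{pmatrix} I & 0 & 0 \\ \ast & \DD_1\DD_2 L_d + \tfrac{1}{2}\DD_1 \ffrdp & \DD_2\DD_S L_d + \tfrac{1}{2}\DD_S \ffrdp \\ \ast & \partial \tilde S/\partial q_0 & \partial \tilde S/\partial S_0 \end{pmatrix}.
\]
Expansion along the first block row reduces invertibility to that of the $(n+1)\times(n+1)$ lower-right block. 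A direct quotient-rule calculation gives
\[
\partial \tilde S/\partial S_0 = \tfrac{1}{B^2}\bigl(B^2 + (\DD_S A)B - A\,\DD_S B\bigr),
\]
whose numerator is precisely the scalar appearing in the hypothesis, so by assumption $\partial \tilde S/\partial S_0 \neq 0$. Applying the Schur complement formula with respect to this nonzero scalar reduces the determinant to (a nonzero multiple of) that of
\[
\bigl(\partial \tilde S/\partial S_0\bigr)\bigl(\DD_1\DD_2 L_d + \tfrac{1}{2}\DD_1\ffrdp\bigr) - \bigl(\DD_2\DD_S L_d + \tfrac{1}{2}\DD_S\ffrdp\bigr)\,\partial \tilde S/\partial q_0.
\]
Substituting the explicit expressions for $\partial \tilde S/\partial S_0$ and $\partial\tilde S/\partial q_0 = B^{-2}\bigl((\DD_1 A)B - A\,\DD_1 B\bigr)$, clearing the overall factor $B^{-2}$, and expanding $\DD_1 A = (\DD_1\ffrdp)q_1 - \ffrdm - (\DD_1\ffrdm)q_0$ yields exactly the matrix displayed in the lemma (up to the notational shift $q_0\to q_1$, $q_1\to q_2$ used there).

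The only real work is the bookkeeping in this last step: keeping track of which terms are scalars, which are column vectors $(\DD_2\DD_S L_d + \tfrac{1}{2}\DD_S\ffrdp)$, which are row vectors $\partial \tilde S/\partial q_0$, and which are $n\times n$ matrices, so that the Schur complement is assembled correctly. Everything else is linear algebra combined with the inverse function theorem, and the nondegeneracy hypothesis on $(\DD_S L_d)^2 + (\DD_S\ffrdp q_2 - \DD_S\ffrdm q_1)\DD_S L_d - (\ffrdp q_2 - \ffrdm q_1)\DD_S^2 L_d$ is exactly what is needed to perform that complement in the first place.
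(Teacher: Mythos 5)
Your proposal is correct and follows exactly the route the paper indicates (it gives no written proof, only the remark that the lemma follows from the inverse function theorem and linear algebra): the block-triangular reduction for $\mathbb{F}^{f-}L_d$ and the Schur complement with respect to the nonzero scalar $\partial \tilde S/\partial S_0$ for $\mathbb{F}^{f+}L_d$ both check out, and the quotient-rule numerators you compute match the displayed expressions after the index shift $q_0\to q_1$, $q_1\to q_2$. Nothing is missing.
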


\begin{definition}
    A discrete Lagrangian $L_d:Q\times Q\times \RR\longrightarrow \RR$ is said to be \emph{semi-regular} if $\mathbb{F}^{f-}L_d$ is a local diffeomorphism. If, in addition, $\mathbb{F}^{f+}L_d$ is a local diffeomophism, $L_d$ will be said to be \emph{regular}. If both discrete thermodynamic Legendre transforms are global diffeomorphisms, $L_d$ will be said to be \emph{hyperregular}.
\end{definition}

\begin{definition}
    Let $L_d:Q\times Q\times \RR\longrightarrow \RR$ be a discrete hyperregular Lagrangian. We define the discrete thermodynamic Lagrangian flow as the map $\Phi_d:Q\times Q\times \RR\longrightarrow Q\times Q\times \RR$ given by
    $$
    \Phi_d=\parent{\mathbb{F}^{f-}L_d}^{-1}\circ \mathbb{F}^{f+}L_d.
    $$
\end{definition}

\begin{remark}
    We can define the discrete thermodynamic Lagrangian flow locally even when the Lagrangian is just semi-regular. In addition, when it is regular, it will be a local diffeomorphism and, if it is hyperregular, a global diffeomorphism.
\end{remark}

\begin{theorem}
    Let $L_d:Q\times Q\times \RR \longrightarrow \RR$ be a discrete hyperregular Lagrangian. Then $\sigma=(q_0,\ldots,q_N,S_0,S_N)\in \Omega\parent{\{t_k\}_{k=0}^N}$ is a solution of the discrete thermodynamic forced Euler--Lagrange equations if and only if
    \begin{equation}\label{EcFlujoLagrangiano}
    \Phi_d\parent{q_{k-1},q_k,S_{k-1}}=\parent{q_k,q_{k+1},S_k},
    \end{equation}
    for all $k=1,\ldots, N-1$.  
\end{theorem}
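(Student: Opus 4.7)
The plan is to reduce the statement to the momentum-matching characterization already established in the previous proposition. That result asserts that $\sigma\in \Omega\parent{\{t_k\}_{k=0}^N}$ is a solution of the discrete thermodynamic Euler--Lagrange equations if and only if
$$
\mathbb{F}^{f+}L_d\parent{q_{k-1},q_k,S_{k-1}} = \mathbb{F}^{f-}L_d\parent{q_k,q_{k+1},S_k} \quad \text{for every } k=1,\ldots,N-1.
$$
Hyperregularity of $L_d$ means precisely that $\mathbb{F}^{f-}L_d\colon Q\times Q\times \RR \to \cT Q \times \RR$ is a global diffeomorphism onto its image, so $\parent{\mathbb{F}^{f-}L_d}^{-1}$ is a globally defined smooth map.

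The first step is to apply $\parent{\mathbb{F}^{f-}L_d}^{-1}$ to both sides of the momentum-matching equation. On the right, this yields $(q_k,q_{k+1},S_k)$; on the left, it yields precisely $\Phi_d\parent{q_{k-1},q_k,S_{k-1}}$ by the definition of the discrete thermodynamic Lagrangian flow. Conversely, applying $\mathbb{F}^{f-}L_d$ to both sides of \eqref{EcFlujoLagrangiano} recovers the momentum-matching identity. Hence both conditions are equivalent for every $k=1,\ldots,N-1$.

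It remains only to check that the equality $\mathbb{F}^{f+}L_d(q_{k-1},q_k,S_{k-1}) = \mathbb{F}^{f-}L_d(q_k,q_{k+1},S_k)$ encodes the full system \eqref{eq:EulerLagrangeDiscretas} and not merely a part of it. Comparing the three components of both sides: the first components are $q_k$ on both sides and are trivially equal; equating the second components yields precisely equation \eqref{EulerLagrangeTermoDiscreta}; and equating the third components, using that $\mathbb{F}^{f-}L_d$ leaves the $S$-coordinate unchanged, yields exactly the entropy-update equation \eqref{EntropiaDiscreta}. This confirms that the momentum-matching condition is equivalent to \eqref{eq:EulerLagrangeDiscretas} and therefore completes the equivalence with \eqref{EcFlujoLagrangiano}.

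There is no substantial obstacle here: the theorem is essentially a repackaging of the previous proposition through the definition of $\Phi_d$, and the only analytical input needed is the global invertibility of $\mathbb{F}^{f-}L_d$ supplied by hyperregularity. The proof is therefore short and can be presented simply by chaining these equivalences.
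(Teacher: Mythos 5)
Your proposal is correct and follows essentially the same route as the paper: both reduce the theorem to the momentum-matching characterization of the preceding proposition and then compose with $\mathbb{F}^{f-}L_d$ or its inverse, using hyperregularity to guarantee the inverse exists globally. The extra componentwise verification you include (that momentum matching encodes both \eqref{EulerLagrangeTermoDiscreta} and \eqref{EntropiaDiscreta}) is a useful sanity check but is really the content of the earlier proposition, which the paper takes as already established.
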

\begin{proof}
    If $\sigma\in \Omega\parent{\{t_k\}_{k=0}^N}$ is a solution of the discrete thermodynamic forced Euler--Lagrange equations, we know that for every $k=1,\ldots, N-1$, we have $\mathbb{F}^{f-}L_d\parent{q_k,q_{k+1},S_k}=\mathbb{F}^{f+}L_d\parent{q_{k-1},q_k,S_{k-1}}$. Since $\mathbb{F}^{f-}L_d$ is a diffeomorphism, composing with its inverse, we get that $\Phi_d\parent{q_{k-1},q_k,S_{k-1}}=\parent{q_k,q_{k+1},S_k}$.
    \\
    Conversely, if (\ref{EcFlujoLagrangiano}) holds, then composing to the left with $\mathbb{F}^{f-}L_d$ we get that the momentum matching equation holds and, thus, $\sigma$ is a solution of the discrete thermodynamic forced Euler--Lagrange equations. 
\end{proof}

We will now define two discrete $2$-forms that will let us study the evolution of the geometric structure of the problem via the discrete thermodynamic Lagrangian flow.

\begin{definition}
    Let $L_d: Q\times Q\times \RR\longrightarrow \RR$ be a discrete Lagrangian and let $\omega$ be the pullback of the canonical $2$-form on $\cT Q$ to $\cT Q \times \RR$ via the canonical projection. We define
    $$
    \omega^+=\left(\mathbb{F}^{f+}L_d\right)^*\omega, \qquad \omega^-=\left(\mathbb{F}^{f-}L_d\right)^*\omega.
    $$
\end{definition}

\noindent We have that
{
\begin{align*}
    &\left(\mathbb{F}^{f-}L_d\right)^*\dd q=\dd q_{1}\, , \qquad 
    \left(\mathbb{F}^{f+}L_d\right)^*\dd q=\dd q_{2}\, ,\\
    &\left(\mathbb{F}^{f-}L_d\right)^*\dd p=-\parent{\DD_1^2L_d+\frac{1}{2}\DD_1\fdm }\dd q_{1}-\parent{\DD_2\DD_1L_d+\frac{1}{2}\DD_2\fdm }\dd q_{2}\, ,\\
    &\left(\mathbb{F}^{f+}L_d\right)^*\dd p=\parent{\DD_1\DD_2L_d+\frac{1}{2}\DD_1\fdp }\dd q_{1}-\parent{\DD_2^2L_d+\frac{1}{2}\DD_2\fdp}\dd q_{2}\, ,
\end{align*}
and thereupon
$$
\omega^+=\parent{\DD_1\DD_2L_d+\frac{1}{2}\DD_1\fdp }\dd q_{1}\wedge \dd q_{2}\, , \qquad \omega^-=\parent{\DD_1\DD_2L_d+\frac{1}{2}\DD_1\fdm }\dd q_{1}\wedge \dd q_{2}\, .
$$
}
In particular, this implies that $\omega^+ $ is independent of the chosen coordinate chart.

\begin{theorem}
    Let $\parent{Q,L_d,\ffrd, \fextd}$ be a simple discrete forced thermodynamic system. Assume that $L_d$ is hyperregular. Then, 
    $$\Phi_d^*\, \omega^-=\omega^+\, .$$ 
    Moreover, if $\fextd\equiv 0$, then
    $$\Phi_d^*\, \omega^+=\omega^-\, .$$
\end{theorem}
\begin{proof}
    For the first equation, notice that
    $$
    \Phi_d^*\, \omega^-=\Phi_d^*\parent{\left(\mathbb{F}^{f-}L_d\right)^*\omega}=\parent{\mathbb{F}^{f-}L_d\circ \Phi_d}^*\omega=\left(\mathbb{F}^{f+}L_d\right)^* \omega.
    $$
    For the second equation, if $\fextd=0$, fixing $S_0\in \RR$, we define the restricted discrete action map as $\mathcal{A}_d^r:Q\times Q\longrightarrow \RR$ given by
    $$
    \mathcal{A}_d^r\parent{q_0,q_1}=\mathcal{A}_d(\sigma),
    $$
    where $\sigma$ is the only solution of the discrete thermodynamic Euler--Lagrange equations starting at $(q_0,q_1,S_0)$. Thus, for every $v_d=(q_0,q_1)\in Q\times Q$ and $w_{v_d}\in \T_{v_d}(Q\times Q)$ we have, using (\ref{DiferencialAccionTotal}) 
    and the properties of the solutions to the discrete thermodynamic Euler--Lagrange equations, that
    $$
    \dd\mathcal{A}_d^r(v_d)\cdot w_{v_d}=\Theta^+_{L_d}\parent{\Phi_d^{N-1}(v_d)}\cdot \dd\parent{\Phi_d^{N-1}}w_{v_d}-\Theta^-_{L_d}(v_d)\cdot w_{v_d}
    $$
    $$
    =\parent{\parent{\Phi_d^{N-1}}^*\parent{\Theta^+_{L_d}}}(v_d)\cdot w_{v_d}-\Theta^-_{L_d}(v_d)\cdot w_{v_d}.
    $$
    Taking the differential of the last expression and taking into account the expression of the $1$-forms, we get
    $$
    \parent{\parent{\Phi_d^{N-1}}^*\omega^+}(v_d)\cdot w_{v_d}-\omega^-(v_d)\cdot w_{v_d}=0.
    $$
    Since this reasoning is valid for every $N$, taking $N=2$ we obtain the desired result.
\end{proof}

\subsection{Discrete Symmetries and Discrete Noether's theorem}\label{sec:discrete_Noether}

Let $\Phi : G \times Q \longrightarrow Q$ be a Lie group action of the Lie group $G$ on the base manifold $Q$. We define the lifted action $\tilde{\Phi} : G \times Q \times Q \times \RR \to Q \times Q \times \RR$ of $G$ on $Q \times Q \times \RR$ to be the diagonal action on $Q \times Q$ and the identity on
$\RR$, namely,
$$
\tilde{\Phi} (g, q_0, q_1, S) = g (q_0, q_1, S) = \big(\Phi_g (q_0), \Phi_g (q_1), S\big)\, .
$$
Let $\xi_Q\in \X(Q)$ denote the infinitesimal generator on $Q$ associated to a Lie algebra element $\xi \in \mathfrak{g}$, given by
$$\xi_Q (q) = \restr{\frac{\dd }{\dd t}}{t=0} \Phi_{\exp t \xi}(q)\, ,$$
for each $q\in Q$, where $\exp\colon \mathfrak{g}\to G$ is the exponential map. The infinitesimal generator on $Q \times Q \times \RR$ can be identified with the triple
$$
    \xi_{Q\times Q\times\RR} = (\xi_Q, \xi_Q, 0) \in \X(Q\times Q\times \RR)\, ,
$$
using the identification $\T(Q\times Q\times \RR)\cong \T Q \oplus \T Q \oplus \T \RR$.
The \emph{discrete momentum maps} $\mommap_{L_d}^{f+},\, \mommap_{L_d}^{f-}\colon Q\times Q\times \RR\to \mathfrak{g}^*$ are defined by
\begin{equation}
\begin{aligned}
  & \left\langle \mommap_{L_d}^{f+}(q_0, q_1, S_0), \xi   \right\rangle 
  = \left\langle p^+ (q_0,q_,S_0), \xi_Q (q_1) \right\rangle\, ,   \\
  & \left\langle \mommap_{L_d}^{f-}(q_0, q_1), \xi   \right\rangle 
  = \left\langle p^- (q_0,q_1,S_0), \xi_Q (q_0) \right\rangle\, ,
\end{aligned}
\end{equation}
where $p^+$ and $p^-$ are the discrete momenta defined in \eqref{eq:discrete_momenta}. Given an element $\xi$ of $\mathfrak{g}$, if $\left\langle \mommap_{L_d}^{f+}, \xi  \right\rangle = \left\langle \mommap_{L_d}^{f-}, \xi  \right\rangle $, one can define the function
\begin{equation}
    \begin{aligned}
      \mommap_d^\xi\colon Q\times Q\times \RR &\to \RR \\
      (q_0, q_1,S_0) & \mapsto \left\langle \mommap_{L_d}^{f+}, \xi  \right\rangle  (q_0, q_1,S_0) = \left\langle \mommap_{L_d}^{f-}, \xi  \right\rangle  (q_0, q_1,S_0)\, .
    \end{aligned}
\end{equation}  



\begin{theorem}[Discrete forced Noether's theorem]\label{theorem:Noether_discrete}
    The following statements are equivalent:
    \begin{enumerate}
        \item The function $\mommap_d^\xi$ is well-defined, and it is a constant of the motion, namely, $ \mommap_d^\xi  (q_{N-1}, q_N, S_{N-1}) = \mommap_d^\xi (q_0, q_1, S_0)$ along the discrete Lagrangian flow.
        \item $\xi_{Q\times Q\times\RR}(L_d) + \fd(\xi_{Q\times Q\times\RR}) = 0$.
    \end{enumerate} 
\end{theorem}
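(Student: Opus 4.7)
The plan is to show that the condition in statement 2 is nothing more than the well-definedness of $\mommap_d^\xi$, and then to verify that well-definedness automatically forces conservation along the discrete Lagrangian flow, so that the two statements amount to the same thing.

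First I would unpack statement 1 directly. Well-definedness of $\mommap_d^\xi$ at a point $(q_0,q_1,S_0)$ means that
\begin{equation*}
\langle p^+_d(q_0,q_1,S_0),\xi_Q(q_1)\rangle \;=\; \langle p^-_d(q_0,q_1,S_0),\xi_Q(q_0)\rangle,
\end{equation*}
and substituting the explicit formulas \eqref{eq:discrete_momenta} for $p^\pm_d$ rewrites this (after dividing by $h$) as
\begin{equation*}
\DD_1 L_d\cdot \xi_Q(q_0) + \DD_2 L_d\cdot \xi_Q(q_1) + \tfrac{1}{2}\ffrdm\cdot \xi_Q(q_0) + \tfrac{1}{2}\ffrdp\cdot \xi_Q(q_1) = 0.
\end{equation*}
Because $\xi_{Q\times Q\times \RR}=(\xi_Q,\xi_Q,0)$, under the splitting $\T(Q\times Q\times \RR)\cong \T Q\oplus \T Q\oplus \T \RR$ the first two terms are exactly $\xi_{Q\times Q\times \RR}(L_d)$ (the $\DD_S L_d$ piece is annihilated by the vanishing third component), and the last two terms are $\ffrd(\xi_{Q\times Q\times \RR})$ (up to the midpoint weight that is absorbed in the stated convention). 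This gives the equivalence between well-definedness of $\mommap_d^\xi$ and the invariance identity of statement 2.

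Second, assuming that $\mommap_d^\xi$ is well-defined, I would show that its constancy along the discrete Lagrangian flow is automatic. If $(q_0,q_1,S_0),(q_1,q_2,S_1),\ldots$ is a solution of the discrete thermodynamic Euler--Lagrange equations, then the momentum matching characterization (Proposition preceding this theorem) together with the definitions of $p^\pm_d$ yields $p^+_d(q_{k-1},q_k,S_{k-1}) = p^-_d(q_k,q_{k+1},S_k)$. Combining with well-definedness applied at both ends gives
\begin{equation*}
\mommap_d^\xi(q_{k-1},q_k,S_{k-1}) = \langle p^+_d(q_{k-1},q_k,S_{k-1}),\xi_Q(q_k)\rangle = \langle p^-_d(q_k,q_{k+1},S_k),\xi_Q(q_k)\rangle = \mommap_d^\xi(q_k,q_{k+1},S_k),
\end{equation*}
so iterating from $k=1$ to $k=N-1$ gives $\mommap_d^\xi(q_{N-1},q_N,S_{N-1}) = \mommap_d^\xi(q_0,q_1,S_0)$.

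Finally, for the converse direction, note that statement 1 explicitly requires $\mommap_d^\xi$ to be well-defined (independently of the dynamics), so it directly implies the invariance identity of statement 2 by the first computation. I expect the main subtlety to be the careful bookkeeping of the factor $\tfrac{1}{2}$ coming from the symmetric midpoint discretization of the friction forms and the precise interpretation of $\ffrd(\xi_{Q\times Q\times \RR})$ under the identification of $\ffrd$ as a semibasic $1$-form on $Q\times Q\times \RR$; once these conventions are matched, the argument reduces to the two short computations above, which do not touch the entropy variable thanks to the vanishing third component of $\xi_{Q\times Q\times \RR}$.
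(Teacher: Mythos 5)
Your argument is correct in structure and is essentially the standard discrete Noether argument that the paper itself defers to by citing \cite{d.L.L2022}: statement 2 is the pointwise identity $\left\langle \mommap_{L_d}^{f+}-\mommap_{L_d}^{f-},\xi\right\rangle=0$ (i.e.\ well-definedness of $\mommap_d^\xi$), and conservation then follows by chaining this with the momentum-matching equations $p^+_d(q_{k-1},q_k,S_{k-1})=p^-_d(q_k,q_{k+1},S_k)$ along a solution; the converse is immediate since statement 1 already contains well-definedness. The fact that the third component of $\xi_{Q\times Q\times\RR}$ vanishes, so that neither $\DD_S L_d$ nor the entropy constraint enters, is also correctly exploited.

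The one point you should not leave as ``absorbed in the stated convention'' is the factor $\tfrac{1}{2}$. With the paper's definitions \eqref{eq:discrete_momenta} one gets, after dividing by $h$,
\begin{equation*}
\frac{1}{h}\left\langle \mommap_{L_d}^{f+}-\mommap_{L_d}^{f-},\xi\right\rangle
=\xi_{Q\times Q\times\RR}(L_d)+\tfrac{1}{2}\,\ffrd\!\left(\xi_{Q\times Q\times\RR}\right),
\end{equation*}
since $\ffrd=(\ffrdm,\ffrdp,0)$ pairs with $(\xi_Q,\xi_Q,0)$ to give $\ffrdm\!\cdot\xi_Q(q_0)+\ffrdp\!\cdot\xi_Q(q_1)$ with no $\tfrac12$. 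So well-definedness is equivalent to $\xi_{Q\times Q\times\RR}(L_d)+\tfrac{1}{2}\ffrd(\xi_{Q\times Q\times\RR})=0$, which agrees with statement 2 as literally written only if $\ffrd(\xi_{Q\times Q\times\RR})=0$. This is almost certainly a normalisation slip in the theorem's statement (the cited reference has no $\tfrac12$ in its discrete momenta, so there the two expressions coincide), but a complete proof must either carry the $\tfrac12$ explicitly or note that the statement should be read with it; everything else in your argument goes through verbatim once this is fixed.
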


\noindent The proof, \textit{mutatis mutandis}, is identical to the one from \cite[Theorem~2]{d.L.L2022a}.

\begin{table}[H]
    $$\begin{array}{|c|c|c|}
        \hline
        & \text{Our discretization} & \text{Gay-Balmaz and Yoshimura}  \\
        \hline
        \text{Space} &Q\times Q \times \RR & Q\times Q \times \RR \times \RR\\
        \hline
        \text{Evolution} & 
        \text{\tiny $
        \begin{array}{l}
             \DD_1L_d\parent{q_k,q_{k+1},S_k}+\frac{1}{2}\fdm\parent{q_k,q_{k+1},S_k}\\
             +\DD_2L_d\parent{q_{k-1},q_{k},S_{k-1}}+\frac{1}{2}\fdp\parent{q_{k-1},q_{k},S_{k-1}}=0\, ,\\\\
             S_k=S_{k-1}+\frac{\ffrdp \parent{q_{k-1},q_k,S_{k-1}}q_{k}-\ffrdm \parent{q_{k-1},q_k,S_{k-1}}q_{k-1}}{\DD_S L_d\parent{q_{k-1},q_k,S_{k-1}}}\, .
        \end{array}$}
        & 
        \text{\tiny $
        \begin{array}{l}
             \DD_1L_d\parent{q_k,q_{k+1},S_k,S_{k+1}}\\
             +\frac{1}{2}\fdm\parent{q_k,q_{k+1},S_k,S_{k+1}}\\
             +\DD_2L_d\parent{q_{k-1},q_{k},S_{k-1},S_k}
             \\+\frac{1}{2}\fdp\parent{q_{k-1},q_{k},S_{k-1},S_k}=0\, ,\\\\
             \parent{q_k,q_{k+1},S_k,S_{k+1}}\in C_d\, .
        \end{array}$}\\
        \hline
    \end{array}
    $$
    \caption{Comparison between the discretisation provided by our method and the one developed in \cite{G.Y2018a} for a Lagrangian thermodynamic system $(\T Q\times \RR, \omega_L, \eta_L, L, \Ftext, \Ftfr)$. Note that in their approach the evolution is given by an equation and a certain constraint $C_d\subseteq Q\times Q \times \RR \times \RR$.}
    \label{table:comparison_approaches}
\end{table}

\section{Examples}\label{sec:examples}
\subsection{Damped harmonic oscillator}\label{subsec:example_harmonic_oscillator}
Let us consider a one dimensional damped harmonic oscillator, such as in \cite{A.d.L+2020}, whose continuous Lagrangian $L: \T\RR\times \RR\cong \RR^3\to \RR$ is given by
$$
L(q,v,S)=\frac{1}{2}v^2-\frac{1}{2}q^2-\gamma S\, .
$$
We will consider Rayleigh type friction forces with a friction coefficient proportional to the temperature ($\gamma=T=-\pder{L}{S}$), given by
$$
\Ffr(q,v)=-\gamma v \dd q\, .
$$
{ We will consider no external forces act on the system. }The corresponding thermodynamic Euler--Lagrange equations \eqref{eq:thermodynamic_Euler_Lagrange} read 
\begin{equation}\label{eq:ExContinuous}
    \Ddot{q}+\gamma \dot{q} +q=0\, , \qquad \dot{S}=\dot{q}^2\, .
\end{equation}
In particular, note that the first equation is the evolution of a damped harmonic oscillator.
Notice that, as explained in Remark \ref{remark:RelationToContact},  equations \eqref{eq:ExContinuous} are the thermodynamic Herglotz equations for the Lagrangian $L$. The general solution of these equations is given by
$$
q(t)=Ae^{-\frac{\gamma t}{2}}\cos\parent{t\, \sqrt{1-\parent{\frac{\gamma}{2}}^2}+\phi}\, ,
$$
where $A$ and $\phi$ are arbitrary real constants. Hereinafter, we will consider the solution with initial position $q=0$ and initial velocity $v=1$.

Given a fixed time step $h\in \RR$, we will approximate the continuous Lagrangian $L$ by the discrete Lagrangian $L_d:Q\times Q\times \RR \to \RR$ given by
$$
L_d(q_0,q_1,S_0)=\frac{\parent{q_1-q_0}^2}{2h^2}- \frac{\parent{q_1+q_0}^2}{8}-\gamma S_0\, ,
$$
as prescribed by the midpoint rule substitutions:
$$q\mapsto \frac{q_1+q_0}{2}\, , \quad v\mapsto \frac{q_1-q_0}{h}\, .$$
Similarly, we will approximate the friction force $\Ffr$ by the discrete friction forces given by
$$
\ffrdm\parent{q_0,q_1,S_0}=-\gamma \frac{q_1-q_0}{h} \dd q_0\, , \quad
\ffrdp\parent{q_0,q_1,S_0}=-\gamma \frac{q_1-q_0}{h} \dd q_1\, ,
$$
{ and take $\ffrdpm=0$. }The discrete thermodynamic Euler--Lagrange equations \eqref{eq:EulerLagrangeDiscretas} for this system provide the following integration method:
$$ q_{k+1}=\frac{2\parent{4-h^2}}{4+h(h+2\gamma)}q_{k}-\frac{4+h^2-4h\gamma}{4+h(h+2\gamma)}q_{k-1}\, , \quad
S_k=S_{k-1}+\frac{\parent{q_k-q_{k-1}}^2}{h}\, .$$
For the simulation, we have considered a factor $\gamma=0.1$.
Using a decreasing step size, we have carried out the integration of equations (\ref{eq:ExContinuous}) from $t=0$ to $t=1000$, using the discrete thermodynamic Euler--Lagrange equations, with the exact values as initial data for the first two points of the discrete solution. We have used an implementation of the Runge--Kutta~2 method of the midpoint rule for comparison of the results (see Table~\ref{table:position_oscillator} and Figure~\ref{fig:position_oscillator}). We have also made a comparison between the value of the entropy given by the continuous method and the values provided by both the variational and midpoint rule integrators  (see Table~\ref{table:entropy_oscillator} and Figure~\ref{fig:entropy_oscillator}). Since the method used for the calculation of the continuous entropy produces some outliers, we have compared only the first 1500 values of each integration.

\noindent Via the Legendre transform, we can define the Hamiltonian counterpart of the (continuous) Lagrangian $L$:
$$
H(q,p,S)= \frac{1}{2}p^2+\frac{1}{2}q^2+\gamma S,
$$
which is a conserved quantity. In the discrete setting, we can estimate the Hamiltonian using the variational integrator in three different ways: evaluating it in either of the two Legendre transform maps, $\mathbb{F}^{f+}, \mathbb{F}^{f-}$, or by using the straightforward estimation of the velocity from the integrated curve, $v\sim \frac{q_k-q_{k-1}}{h}$, and that in the continuous setting for this example, $v=p$. The first two methods will yield a value of the Hamiltonian almost constant (up to $10^{-12}$) and nearly identical (up to $10^{-14}$), while the latter will produce an oscillating value with a higher error.

For obtaining the value of the Hamiltonian using the midpoint rule, we again note that, in the continuous setting for this example $v=p$, so that we can use the values of $v$ provided by this method to estimate the Hamiltonian. The resulting value has a greater error, and it decays as $t$ increases initially, as can be seen in Table~\ref{table:Hamiltonian_oscillator} and Figure~\ref{fig:Hamiltonian-Harmonic}.

\begin{figure}[H]
    \centering
    \begin{subfigure}{.495\linewidth}
        \centering
        \includegraphics[width=\linewidth]{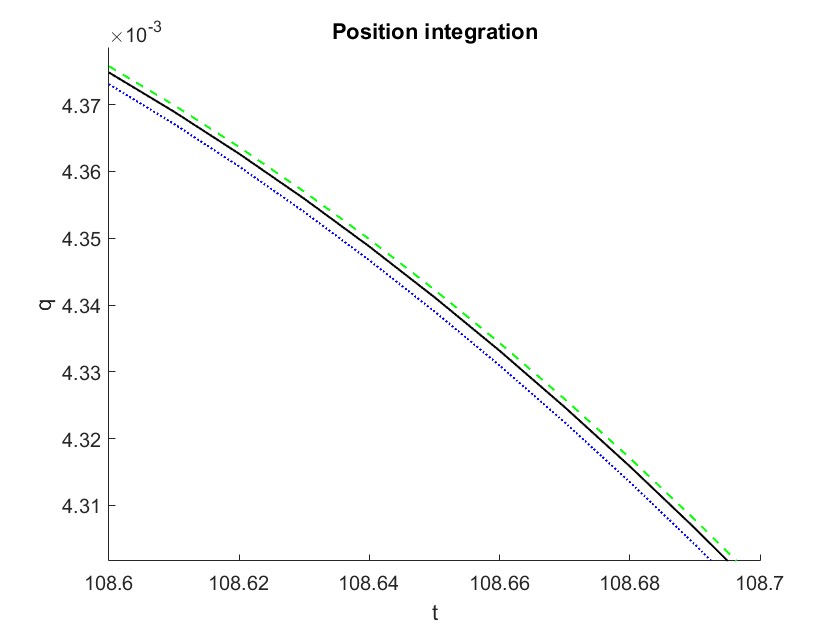}
        \caption{Position}
        \label{fig:position_oscillator}
    \end{subfigure}
    \begin{subfigure}{.495\linewidth}
        \centering
        \includegraphics[width=\linewidth]{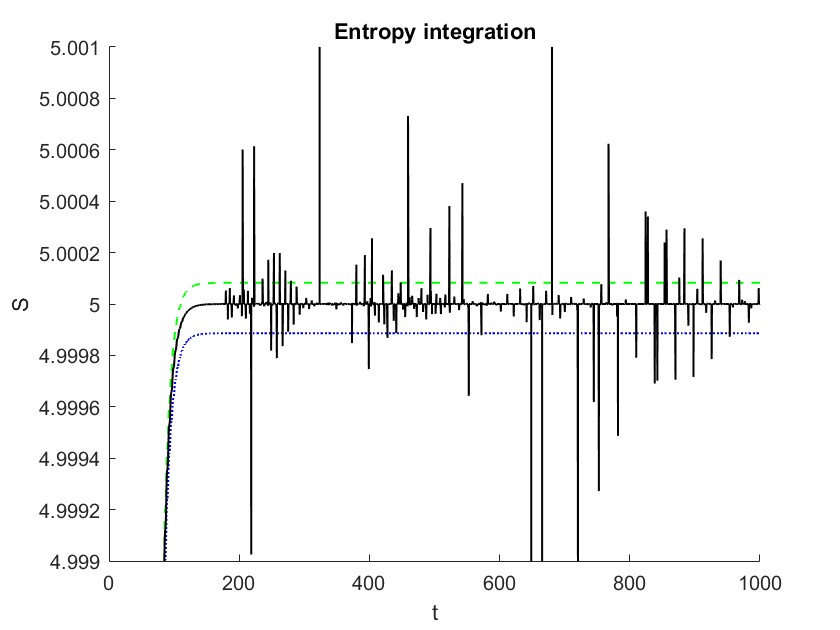}
        \caption{Entropy}
        \label{fig:entropy_oscillator}
    \end{subfigure}
    
    \caption{Integration results for the damped harmonic oscillator with a time step $h=0.01$ using the Discrete Thermodynamic Euler--Lagrange equations solution (in green and dashed) and the method of the midpoint rule (in blue and dotted), compared with the exact continuous solution (in black and solid). A detailed section is shown to allow the distinction of the curves, and to show the numeric errors made in the calculation of the continuous solution for $S$.
    }
\end{figure}

\begin{figure}[H]
    \centering
    \includegraphics[scale=0.35]{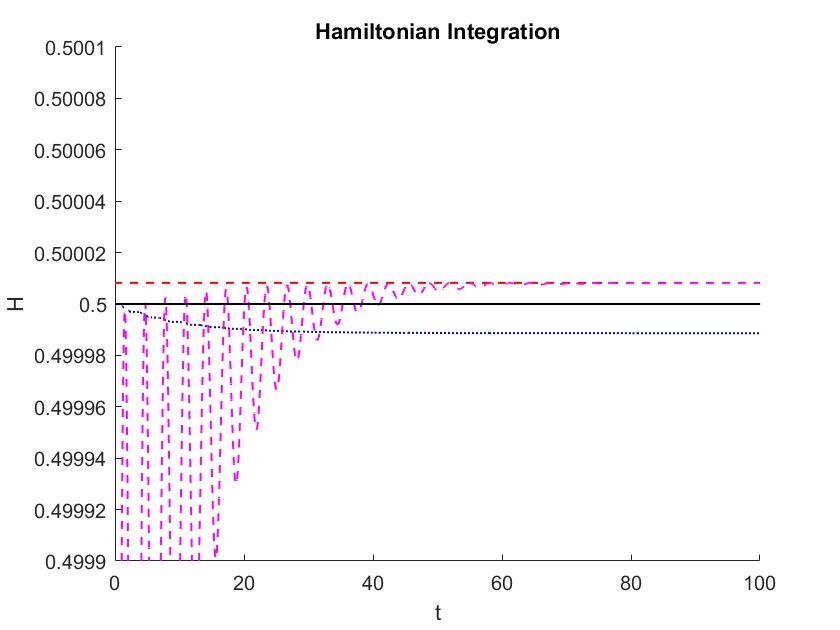}
    \caption{\label{fig:Hamiltonian-Harmonic} Estimation of the value of the Hamiltonian for the damped harmonic oscillator with a time step $h=0.01$ using the Discrete Thermodynamic Euler--Lagrange equations solution and the Legendre transforms $\mathbb{F}^{f+}$ (in green and dashed), $\mathbb{F}^{f-}$ (in red and dashed), as well as the estimation for the velocity (in magenta and dashed) and the method of the midpoint rule (in blue and dotted), compared with the exact continuous solution (in black and solid). The green and red curves are overlapping so they cannot be distinguished.}
    \label{fig:Hamiltonian_oscillator}
\end{figure}

\begin{table}[H]
    \centering
    \begin{subtable}[t]{\linewidth}
        \centering
        \begin{tabular}{c c c}
        \hline 
            Step Size &Discrete thermodynamic E-L equations & Midpoint Method \\
            \hline\hline
            $10^{-1}$ & $6.180 \times 10^{-3}$ & $1.228 \times 10^{-2}$ \\
            $10^{-2}$ &$6.182 \times 10^{-5}$ & $1.226 \times 10^{-4}$ \\
            $10^{-3}$ &$6.173 \times 10^{-7}$ & $1.225 \times 10^{-6}$ \\
            $10^{-4}$ & $2.860 \times 10^{-8}$ & $1.225 \times 10^{-8}$ \\
            \hline
        \end{tabular}
        \caption{Position}
        \label{table:position_oscillator}
    \end{subtable}
    
    \bigskip
    
    \begin{subtable}[t]{\linewidth}
        \centering
        \begin{tabular}{c c c}
            \hline 
            Step Size &Discrete thermodynamic E-L equations & Midpoint Method \\
            \hline\hline
            $10^{-1}$ & $8.10 \times 10^{-3}$ & $7.63 \times 10^{-3}$ \\
            $10^{-2}$ &$3.36 \times 10^{-5}$ & $5.41 \times 10^{-4}$ \\
            $10^{-3}$ &$2.02 \times 10^{-8}$ & $1.26 \times 10^{-7}$ \\
            $10^{-4}$ & $1.49 \times 10^{-11}$ & $4.65 \times 10^{-10}$ \\
            \hline
        \end{tabular}
        \caption{Entropy}
        \label{table:entropy_oscillator}
    \end{subtable}

    \bigskip

    \begin{subtable}[t]{\linewidth}
        \centering
        \begin{tabular}{c c c c}
        \hline 
          Method $p^+_d$ &   Method $p^-_d$ & Variational Method ($v$) & Midpoint Method ($v$) \\
        \hline\hline
         $8.10 \times 10^{-4}$ & $8.10 \times 10^{-4}$ & $3.80 \times 10^{-3}$ & $5.62 \times 10^{-5}$ \\
        $8.24 \times 10^{-6}$ & $8.24 \times 10^{-6}$ & $4.91 \times 10^{-4}$ & $1.14 \times 10^{-5}$ \\
        $8.25 \times 10^{-8}$ & $8.25 \times 10^{-8}$ & $4.99 \times 10^{-5}$ & $1.25 \times 10^{-7}$ \\
        $5.44 \times 10^{-9}$ & $5.44 \times 10^{-9}$ & $5.00 \times 10^{-6}$ & $1.26 \times 10^{-9}$ \\
        \hline
        \end{tabular}
        \caption{Hamiltonian, calculated for $h=0.1$, $h=0.01$, $h=0.0001$ and $h=0.00001$.}
        \label{table:Hamiltonian_oscillator}
    \end{subtable}
    
    \caption{Greatest absolute difference of the exact solution of the damped harmonic oscillator and two numerical approximations: the solution of the discrete thermodynamic Euler--Lagrange equations, and the result of RK-2 method of the midpoint rule.}
\end{table}


\subsection{Ideal gas in a piston}\label{subsec:example_ideal_gas}

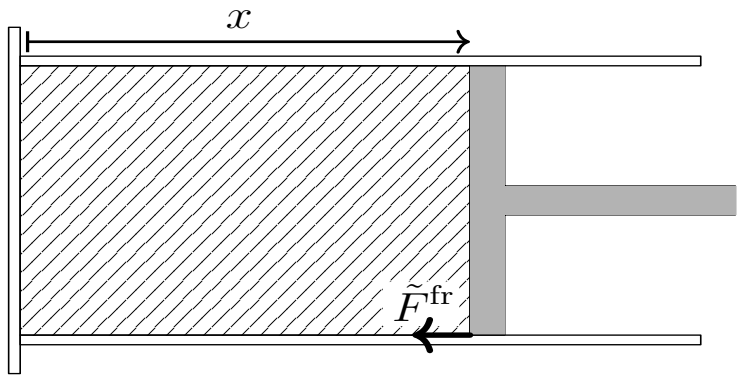
\begin{figure}[h]
\centering
\resizebox{10 cm}{5cm}{
\begin{tikzpicture}[line cap=round, line join=round, rotate=270]
    \def\R{1.4}      
    \def\H{6}        
    \def\P{4}        
    \def\T{0.3}        
    \def\RodH{2}     
    \def\RodW{0.15}
    \def\WallsWidth{0.1}
    \def\GroundHeight{0.1}
    \def\BaselineWidth{1.8}
    \def\Separation{0.15}
    \def\RozHeight{0.5}
    \def\BlankWidth{0.55}
    \def\BlankHeight{0.75}

    \draw[pattern={Lines[angle=45, distance=5pt, line width=0.01pt]}] (-\R,\GroundHeight) rectangle (\R,\P);
    \draw [fill=white, white] (\R-\BlankWidth,\P-\BlankHeight) rectangle (\R-0.1,\P-0.1);
    \draw [thin] (-\R,\P+\T)--(-\RodW,\P+\T)--(-\RodW,\P+\T+\RodH)--(\RodW,\P+\T+\RodH)--(\RodW,\P+\T)--(\R,\P+\T)--(\R,\P)--(-\R,\P)--(-\R,\P+\T);
    \fill[gray!60] (-\R,\P) rectangle (\R,\P+\T);
    \fill[gray!60] (-\RodW,\P+\T-0.1) rectangle (\RodW,\P+\T+\RodH);
    \draw (-\R-\WallsWidth,\GroundHeight) rectangle (-\R,\H);
    \draw ( \R,\GroundHeight) rectangle ( \R+\WallsWidth,\H);
    \draw ( -\BaselineWidth,0) rectangle (\BaselineWidth,\GroundHeight);
    \draw [thick,|->] (-\R-\WallsWidth-\Separation,\GroundHeight+0.05)--(-\R-\WallsWidth-\Separation,\P);
    \node [sloped, above] at ((-\R-\WallsWidth-\Separation,\P/2){$x$};
    \draw [ultra thick,<-] (\R,\P-\RozHeight)--(\R,\P);
    \node [sloped, above] at ((\R,\P-\RozHeight*0.8){$\displaystyle \Ftfr$};
    \end{tikzpicture}
    }
    \caption{Diagram of the cylinder containing the gas (dashed region) and the piston that closes it.}
\end{figure}
    Consider a perfect gas confined by a piston and contained in a cylinder, as was done by Gay-Balmaz and Yoshimura in \cite{Balmaz+2017}. In suitable units and for a suitable number of particles in the gas, the Lagrangian that describes this system is given by
    \begin{equation*}
        L(x, v_x ,S)=\frac{v_x^2}{2}-e^Sx^{-\frac{1}{c}},
    \end{equation*}
    where $c=\frac{3}{2}$ is the molar specific heat capacity at constant volume divided by $R$, the constant of ideal gasses, for a monoatomic gas.

 We also consider a friction force between the cylinder and the piston, which is proportional to its velocity, {and no external force acting on the piston.} In this case, we consider a constant temperature independent coefficient of friction $\gamma$:
    \begin{equation*}
        \Ftfr=-\gamma v_x \dd x.
    \end{equation*}
    The thermodynamic Euler--Lagrange equations for this system are given by
    \begin{equation}\label{eq:PistonContinuous}
        \Ddot{x}+\gamma \dot{x}-\frac{1}{c}e^Sx^{-\parent{1+\frac{1}{c}}}=0, \qquad \dot{S}=\gamma \dot{x}^2x^{\frac{1}{c}}e^{-S}.
    \end{equation}
    For a fixed time step $h\in \RR$, we will approximate the continuous Lagrangian by the discrete Lagrangian $L_d:Q\times Q\times\RR\to \RR$ given by
    \begin{equation*}
        L_d\parent{q_0,q_1,S_0}=\frac{\parent{q_1-q_0}^2}{2h^2}-e^{S_0}\parent{\frac{q_1+q_0}{2}}^{-\frac{1}{c}},
    \end{equation*}
    {the external forces $\fextdpm=0$} and the friction forces by
    \begin{equation*}
        \ffrdm\parent{q_0,q_1,S_0} = -\gamma\frac{q_1-q_0}{h}\, \dd q0\, , \quad \ffrdp\parent{q_0,q_1,S_0}=-\gamma\frac{q_1-q_0}{h}\, \dd q_1.
    \end{equation*}
    These are obtained in the same way as in the previous example. The corresponding discrete thermodynamic Euler--Lagrange equations give the following implicit integration method:
    \begin{align*}
        &-q_{k+1}\parent{\frac{1}{h^2}+\frac{\gamma}{2h}}+\frac{2q_k}{h^2}+q_{k-1}\parent{\frac{\gamma}{2h}-\frac{1}{h^2}}+\\
        &+\frac{1}{2c}\parent{e^{S_k}\parent{\frac{q_{k+1}+q_k}{2}}^{-\parent{1+\frac{1}{c}}}+e^{S_{k-1}}\parent{\frac{q_{k}+q_{k-1}}{2}}^{-\parent{1+\frac{1}{c}}}}=0, \\
        &S_k=S_{k-1}+\frac{\gamma}{h}\parent{q_k-q_{k-1}}^2\parent{\frac{q_k+q_{k-1}}{2}}^{\frac{1}{c2}}e^{-S_{k-1}}.
    \end{align*}
    
    Using a step size $h=0.01$ and a friction coefficient $\gamma=0.1$, we have carried out the integration of equations (\ref{eq:PistonContinuous}) with the variational method and the midpoint rule for initial conditions $x_0=1$, $x_1=1$ (i.e., initial velocity $v_x=0$) and $S_0=10$. Both methods have been compared with the integration obtained by the higher order Runge--Kutta method provided by the function \textit{ode45} of Matlab, using both relative and absolute tolerances of $10^{-10}$ and interpolation between the points resulting from this integration. See Table~\ref{table:position_entropy_ideal_gas} and Figure~\ref{fig:position_ideal_gas}. Moreover, a comparison of the estimations of the Hamiltonian with its initial value (constant in the continuous model) has been carried out, following the same outline as in the previous example (see Table~\ref{table:Hamiltonian_ideal_gas} and Figure~\ref{fig:Hamiltonian_ideal_gas}).

\begin{figure}[H]
    \centering
    \includegraphics[scale=0.35]{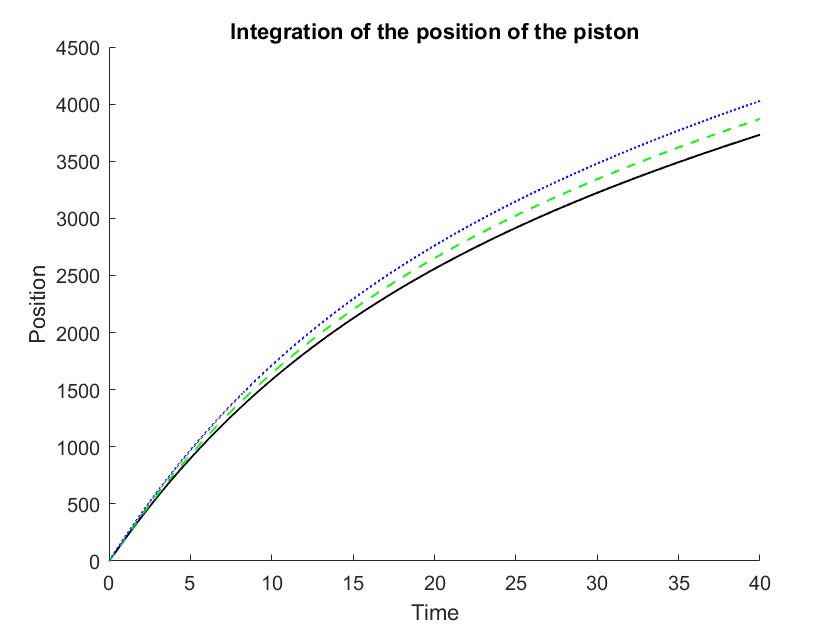}
    \caption{Integration of $x$ for a perfect monoatomic gas contained in a cylinder with a time step $h=0.01$ using the Discrete thermodynamic Euler--Lagrange equations solution (in green and dashed) and the method of the midpoint rule (in blue and dotted), compared with the exact continuous solution (in black and solid).}
 \label{fig:position_ideal_gas}
\end{figure}

\begin{figure}[H]
    \centering
    \includegraphics[scale=0.35]{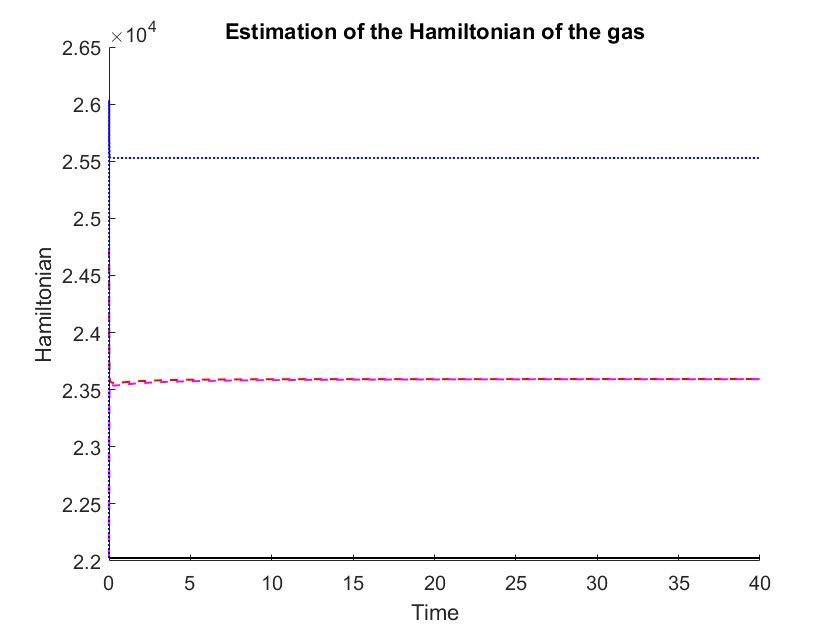}
    \caption{Estimation of the value of the Hamiltonian for a perfect monoatomic gas contained in a cylinder with a time step $h=0.01$ using the Discrete thermodynamic Euler--Lagrange equations solution and the Legendre transforms $\mathbb{F}^{f+}$ (in green and dashed), $\mathbb{F}^{f-}$ (in red and dashed), as well as the estimation for the velocity (in magenta and dashed) and the method of the midpoint rule (in blue and dotted), compared with the exact continuous solution (in black and solid). The green and red curves are overlapping so they cannot be distinguished.}
    \label{fig:Hamiltonian_ideal_gas}
\end{figure}

\begin{table}[H]
    \centering
    \begin{subtable}[t]{\linewidth}
        \centering
        \begin{tabular}{c c c}
        \hline 
         &Discrete thermodynamic E-L equations & Midpoint Method \\
        \hline\hline
         Position & $55.7 $ & $ 125.2$ \\
         Entropy & $9.42 \times 10^{-2}$ & $ 2.01 \times 10^{-1}$ \\
        \hline
        \end{tabular}
        \caption{Position and entropy}
        \label{table:position_entropy_ideal_gas}
    \end{subtable}

    \bigskip
    
    \begin{subtable}[t]{\linewidth}
        \centering
        \begin{tabular}{c c c c}
        \hline 
          Method $p^+_d$ &   Method $p^-_d$ & Variational Method 
          ($v$) & Midpoint Method ($v$) \\
        \hline\hline
         $1.89\times 10^3 $ & $2.70 \times 10^{3}$ & $1.56 \times 10^{3}$ & $4.01 \times 10^{3}$ \\
        \hline
        \end{tabular}
        \caption{Hamiltonian}
        \label{table:Hamiltonian_ideal_gas}
    \end{subtable}
    \caption{Greatest absolute difference between the exact solution and the numerical solutions obtained using the Discrete thermodynamic Euler--Lagrange equations solution and the result of RK-2 method of the midpoint rule, for the evolution of a perfect monoatomic gas contained in a cylinder.}
\end{table}


\subsection{Van der Waals gas in a piston}\label{subsec:example_Van_der_Waals}
    A more realistic example would be to consider a Van der Waals gas formed by particles without internal degrees of freedom, contained in a cylinder similar to that in the previous example. Considering a suitable value for the mass of the cylinder, as well as for the number of particles in the gas, the Lagrangian in suitable units is given by \cite{Johnston2014}:
    \begin{equation*}
        L\parent{x,v_x,S}=\frac{v_x^2}{2}-\parent{\frac{1}{x-\hat{b}}}^{2/3}e^S+\frac{\hat{a}}{x},
    \end{equation*}
    where $\hat{a},\hat{b}$ are constants proportional to the usual parameters $a,b$ of Van der Waals model, representing the average value of the potential energy per unit of concentration for the intermolecular potential and the volume of a molecule, respectively.

    Assuming that the velocity of the piston will be small, it is reasonable to again assume that the friction force between the cylinder and the piston is proportional to the velocity,with the proportionality constant being $\gamma$:
    \begin{equation*}
    \Ftfr=-\gamma v_x \dd x.
    \end{equation*}
    { We will assume again no external forces act on the piston. }In this setting, the thermodynamic Euler--Lagrange equations are the following:
    \begin{equation}\label{eq:EulerLagrangeVanDerWaals}
        \Ddot{x}+\gamma \dot{x}-\frac{2}{3}\parent{\frac{1}{x-\hat{b}}}^{5/3}e^S+\frac{a}{x^2}=0, \qquad \dot{S}=\gamma \dot{x}^2\parent{x-\hat{b}}^{2/3}e^{-S}.
    \end{equation}
    Given a fixed time step $h$, we will approximate the continuous Lagrangian by the discrete version obtained using the same method as in the previous examples:
    \begin{equation*}
        L_d\parent{q_0,q_1,S_0}=\frac{\parent{q_1-q_0}^2}{2h^2}-\parent{\frac{2}{q_1+q_0-2b}}^{2/3}e^{S_0}+\frac{2a}{q_1+q_0},
    \end{equation*}
    {the external forces $\fextdpm=0$} and the friction forces as follows: 
    \begin{equation*}
    \ffrdm\parent{q_0,q_1,S_0}=-\gamma \frac{q_1-q_0}{h}\, \dd q_0\, , \quad \ffrdp\parent{q_0,q_1,S_0}=-\gamma \frac{q_1-q_0}{h}\, \dd q_1.
    \end{equation*}
    The corresponding discrete thermodynamic Euler--Lagrange equations for the system give the following implicit integration method:
    \begin{align*}
        &\parent{-\frac{1}{h^2}-\frac{\gamma}{2h}}q_{k+1}+\frac{2q_k}{h^2}+\parent{\frac{\gamma}{2h}-\frac{1}{h^2}}q_{k-1}+\frac{e^{S_k}}{3}\parent{\frac{2}{q_{k+1}+q_k-2b}}^{5/3}+\\
        &+\frac{e^{S_{k-1}}}{3}\parent{\frac{2}{q_{k}+q_{k-1}-2b}}^{5/3}-2a\parent{\frac{1}{\parent{q_{k+1}+q_k}^2}+\frac{1}{\parent{q_{k}+q_{k-1}}^2}}=0, \\
        &S_k=S_{k-1}+\frac{\gamma}{h}e^{-S_{k-1}}\parent{q_k-q_{k-1}}^2\parent{\frac{q_k+q_{k-1}-2b}{2}}^{2/3}.
    \end{align*}
    We have carried out the integration of equations (\ref{eq:EulerLagrangeVanDerWaals}) using a time step $h=0.01$, taking $\gamma=0.1$, $a=10^3$, $b=0.1$ and for initial conditions $q_0=1$, $S_0=10$ and $v_0=0$ and, thus $q_1=q_0$. As in the previous example, we have compared both methods with the integration provided by the function \textit{ode45} of Matlab, using relative and absolute tolerances of $10^{-10}$, and interpolation between the points resulting from this interpolation. We also present a comparison of the estimation of the Hamiltonian with its initial value, analogous to those in the previous examples. See Table~\ref{table:van_der_Waals}, as well as Figures~\ref{fig:position_van_der_Waals} and~\ref{fig:Hamiltonian_van_der_Waals}.

\begin{table}[H]
    \centering
    \begin{subtable}[t]{\linewidth}
        \centering
        \begin{tabular}{c c c}
        \hline 
         &Discrete thermodynamic E-L equations & Midpoint Method \\
        \hline\hline
         Position & $67.3 $ & $ 178.1$ \\
         Entropy & $1.11 \times 10^{-1}$ & $ 2.78 \times 10^{-1}$ \\
        \hline
        \end{tabular}
        \caption{Position and entropy}
    \end{subtable}

    \begin{subtable}[t]{\linewidth}
        \centering
        \begin{tabular}{c c c c}
        \hline 
          Method $p^+_d$ &   Method $p^-_d$ & Variational Method 
          ($v$) & Midpoint Method ($v$) \\
        \hline\hline
         $2.28\times 10^3 $ & $3.40 \times 10^{3}$ & $1.93 \times 10^{3}$ & $5.74 \times 10^{3}$ \\
        \hline
        \end{tabular}
        \caption{Hamiltonian}
    \end{subtable}
    \caption{Greatest absolute difference between the exact solution and the numerical solutions obtained using the Discrete thermodynamic Euler--Lagrange equations solution and the result of RK-2 method of the midpoint rule, for the evolution of a Van der Waals gas contained in a cylinder.}
    \label{table:van_der_Waals}
\end{table}


\begin{figure}[H]
    \centering
    \includegraphics[scale=0.35]{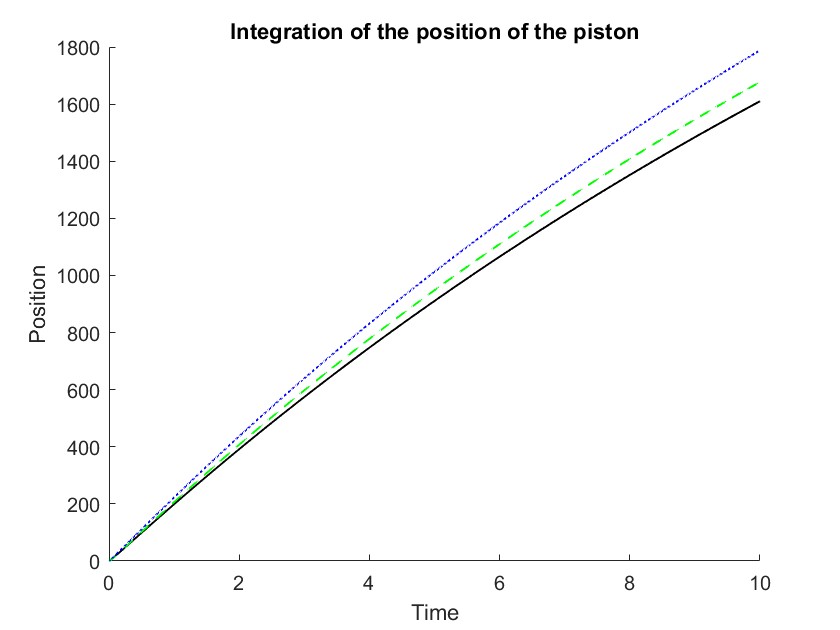}
    \caption{Integration of $x$ for a Van der Waals gas composed of molecules without internal degrees of freedom contained in a cylinder with a time step $h=0.01$ using the Discrete thermodynamic Euler--Lagrange equations solution (in green and dashed) and the method of the midpoint rule (in blue and dotted), compared with the exact continuous solution (in black and solid).}
    \label{fig:position_van_der_Waals}
\end{figure}

\begin{figure}[H]
    \centering
    \includegraphics[scale=0.35]{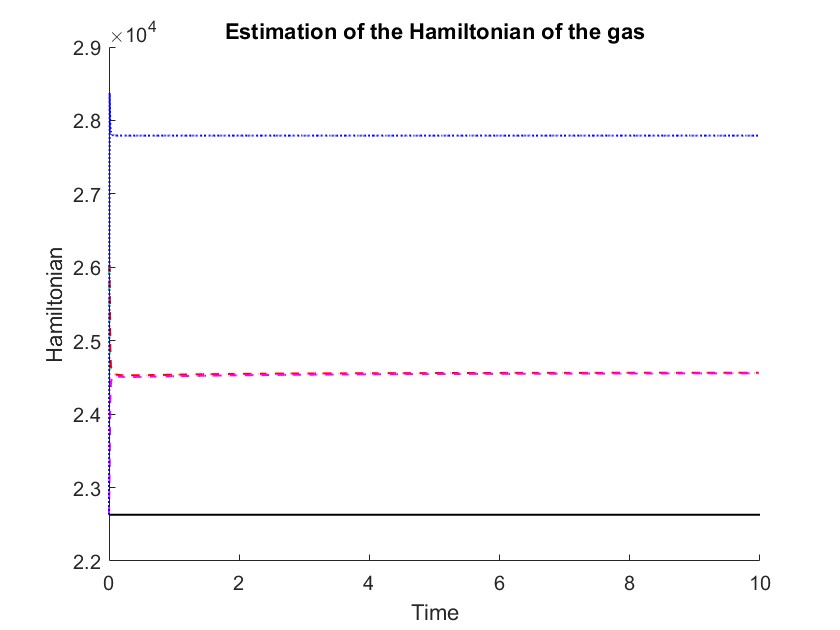}
    \caption{\label{fig:Hamiltonian-VanderWaals} Estimation of the value of the Hamiltonian for a Van der Waals gas composed of molecules without internal degrees of freedom contained in a cylinder with a time step $h=0.01$ using the Discrete thermodynamic Euler--Lagrange equations solution and the Legendre transforms $\mathbb{F}^{f+}$ (in green and dashed), $\mathbb{F}^{f-}$ (in red and dashed), as well as the estimation for the velocity (in magenta and dashed) and the method of the midpoint rule (in blue and dotted), compared with the exact continuous solution (in black and solid). The green and red curves are overlapping so they cannot be distinguished.}
    \label{fig:Hamiltonian_van_der_Waals}
\end{figure}

\subsection{Expansion of a Van der Waals gas against external pressure}\label{subsec:example_Van_der_Waalsb}

    Most of the physically interesting examples of thermodynamic systems involve a non-zero external force. For the case of a Van der Waals gas contained in a piston, this is usually given by the external pressure.

    To make a comparison of the results given by the discrete thermodynamic forced Euler-Lagrange equations and the ones obtained via the GENERIC-Motivated method introduced in  \cite{Fulop2025}, we will work with the reduced version of the Van der Waals equations. The continuous Lagrangian for this system, considering a monoatomic gas with three degrees of freedom, is given by:
    \begin{equation*}
        L\parent{V_r,B_r,S_r}=m_p\frac{B_r^2}{2A^2}-\parent{\frac{2^{4}}{3V_r-1}}^{2/3}\mathrm{exp}\parent{\parent{S_r-1}\parent{\frac{2}{3}\ln{2x_c}+\frac{5}{3}}}+\frac{3}{V_r},
    \end{equation*}
    where $V_r,B_r,S_r$ are the reduced versions of the volume enclosed by the piston, the change rate of that volume in time and the entropy, $A$ is the area of the piston, $m_p$ its mass and $x_c$ is an adimensional coefficient that depends on the parameters of the gas being modeled, which is given by:
    \begin{equation*}
        x_c=\frac{T_c}{8p_c}\parent{\frac{m_{\mathrm{gas}}T_c}{2\pi\hbar}}^{\frac{2}{3}}\, ,
    \end{equation*}
    where $m_{\mathrm{gas}}$ is the mass of a gas molecule. We have assumed that the gas in the piston is He-4 and used the values of its critical magnitudes and molar mass in \cite{He41989}. From now on, we will consider, as in \cite{Fulop2025}, the quotient $\frac{m_p}{A^2}=1$.
    
    Assuming that the velocity of the piston will be small, it is reasonable to again assume that the friction force between the cylinder and the piston is proportional to the velocity,with the proportionality constant being $\gamma$:
    \begin{equation*}
    \Ftfr=-\gamma B_r \dd V_r\, .
    \end{equation*}
    We will also consider an external force given by the reduced pressure of the exterior:
    \begin{equation*}
    \Ftfr=-p_A \dd V_r\, .
    \end{equation*}
    Given a fixed time step $h$, we will approximate the continuous Lagrangian by the discrete version obtained using the same method as in the previous examples:
    \begin{equation*}
        L_d\parent{q_0,q_1,S_0}=\frac{\parent{q_1-q_0}^2}{2h^2}-\parent{\frac{2^{5}}{3\parent{q_1+q_0}-2}}^{2/3}\mathrm{exp}\parent{\parent{S_0-1}\parent{\frac{2}{3}\ln{2x_c}+\frac{5}{3}}}+\frac{6}{q_1+q_0}\, ,
    \end{equation*}
    and the friction and external forces as follows: 
    \begin{align*}
    &\ffrdm\parent{q_0,q_1,S_0}=-\gamma \frac{q_1-q_0}{h}\, \dd q_0\, ,  &\ffrdp\parent{q_0,q_1,S_0}=-\gamma \frac{q_1-q_0}{h}\, \dd q_1\, ,\\
    &\fextdm\parent{q_0,q_1,S_0}=-p_A\, \dd q_0\, ,  &\fextdp\parent{q_0,q_1,S_0}=-p_A\, \dd q_1\, . 
    \end{align*}

    We have carried out the integration of the discrete thermodynamic forced Euler-Lagrange equations for this model using a time step $h=0.01$, taking $\gamma=0.1$, $p_A=1.8$ and for initial conditions $q_0=1.5$, $S_0=1.003702793854103$ and $B_r(0)=0$ and, thus $q_1=q_0$, which correspond to the initial conditions $V_r=1.5$, $T_r=1.01$.

    The results have been compared with the ones given by the GENERIC-motivated extended symplectic method considered in \cite{Fulop2025} (choosing the coefficients $\alpha_b=\alpha_p=1$ for such method), which was shown to outperform the classical Euler method.
    
    In Figure \ref{fig:position_van_der_Waals_forced} it can be seen that for the same time step, the integration of the discrete thermodynamic forced Euler-Lagrange equations outperforms the GENERIC-motivated algorithm, although a non linear equation must be solved at each step, which increases the computational demand.

    A comparison when $p_A=0$ between these two methods is also shown in Figure \ref{fig:position_van_der_Waals_unforced_GENERIC}.


\begin{figure}[H]
    \centering
    \begin{subfigure}{.495\linewidth}
        \centering
        \includegraphics[width=\linewidth]{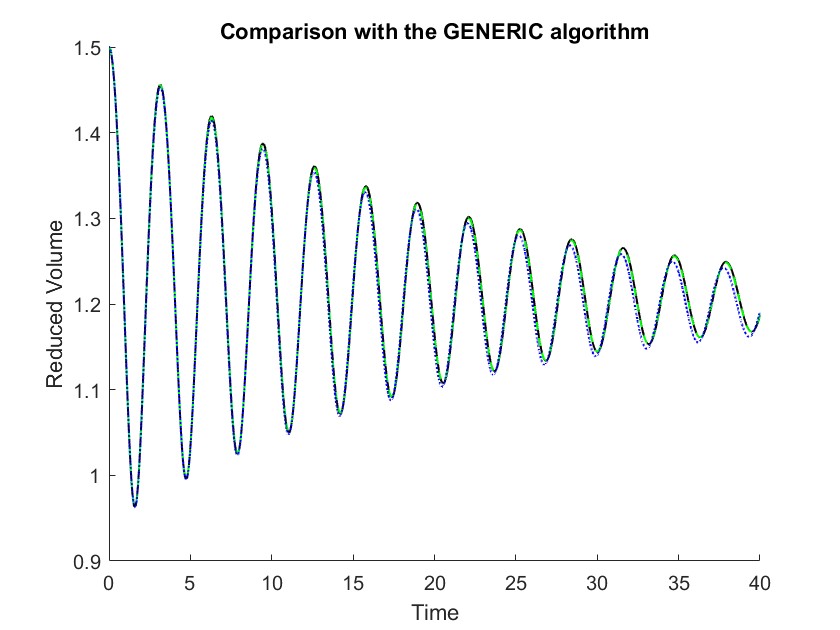}
        \caption{Reduced Volume}
    \end{subfigure}
    \begin{subfigure}{.495\linewidth}
        \centering
        \includegraphics[width=\linewidth]{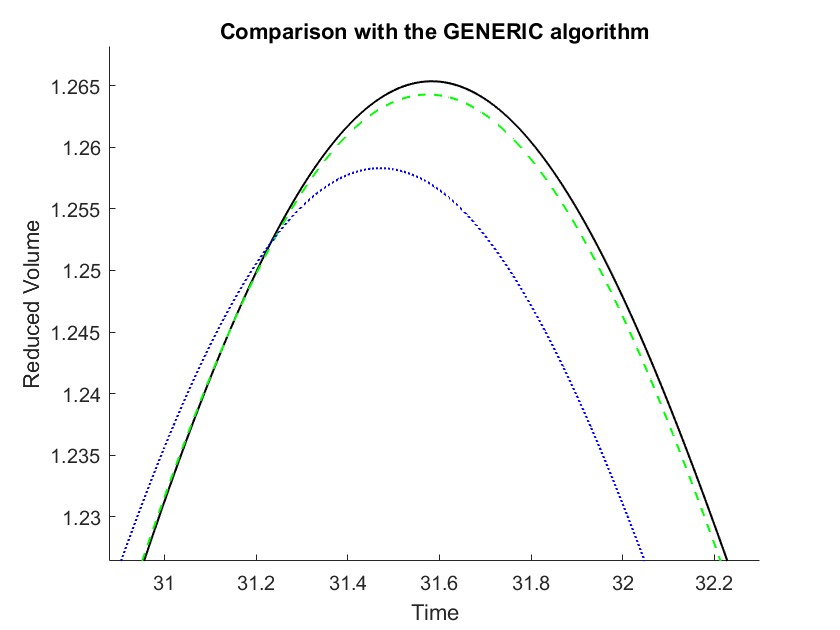}
        \caption{Detailed section}
    \end{subfigure}
    \caption{Integration of $V_r$ for a Van der Waals gas composed of monoatomic molecules contained in a cylinder with a time step $h=0.01$ using the Discrete thermodynamic Euler--Lagrange equations solution (in green and dashed) and the GENERIC-motivated symplectic method (in blue and dotted), compared with the solution given by a RK-4 method with a tolerance of $10^{-15}$ (in black and solid).}
    \label{fig:position_van_der_Waals_forced}
\end{figure}

\begin{figure}[H]
    \centering
    \begin{subfigure}{.495\linewidth}
        \centering
        \includegraphics[width=\linewidth]{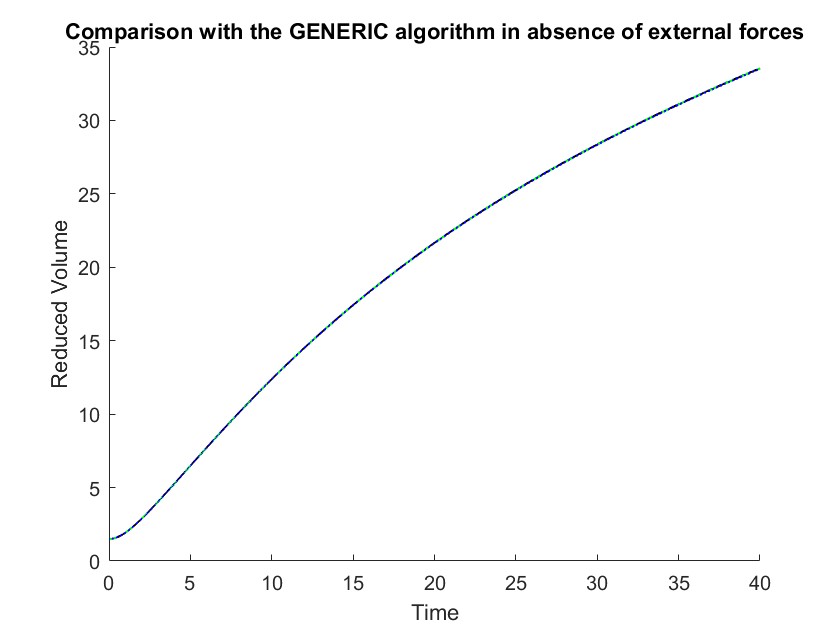}
        \caption{Reduced Volume}
    \end{subfigure}
    \begin{subfigure}{.495\linewidth}
        \centering
        \includegraphics[width=\linewidth]{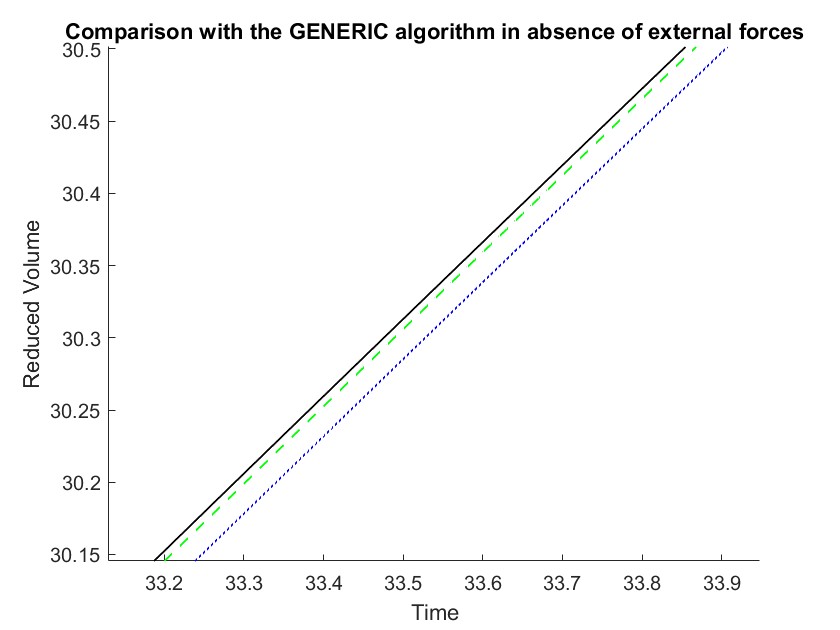}
        \caption{Detailed section}
    \end{subfigure}
    \caption{Integration of $V_r$ for a Van der Waals gas composed of monoatomic molecules contained in a cylinder in absence of external forces with a time step $h=0.01$ using the Discrete thermodynamic Euler--Lagrange equations solution (in green and dashed) and the GENERIC-motivated symplectic method (in blue and dotted), compared with the solution given by a RK-4 method with a tolerance of $10^{-15}$ (in black and solid).}
    \label{fig:position_van_der_Waals_unforced_GENERIC}
\end{figure}



\section{Conclusions and outlook}\label{sec:conclusions}

Inspired by the geometric framework of almost cosymplectic manifolds derived in \cite{d.B2025}, we have developed a discrete variational principle for adiabatically closed simple thermodynamic systems, leading to the discrete thermodynamic forced Euler--Lagrange equations \eqref{eq:EulerLagrangeDiscretas}. We have also presented a Noether's theorem for these systems (Theorem~\ref{theorem:Noether_discrete}) as well as for the continuous counterpart (Theorem~\ref{NoetherI} and Theorem~\ref{proposition:Cartan_symmetries_Lagrangian}. We have illustrated the effectiveness of our method with four examples: the damped harmonic oscillator, both a perfect gas and a Van der Waals gas confined by a piston contained in a cylinder, and a Van der Waals gas subject to external forces.

\bigskip

\noindent In future work, we shall pursue the following objectives:

\begin{itemize}

\item To extend these methods to the remaining thermodynamic systems considered in \cite{G.Y2018,Y.G2022}, namely systems with internal mass transfer, adiabatically closed non-simple thermodynamic systems, and open simple thermodynamic systems.

\item To investigate the relationship between thermodynamic systems and nonholonomic constraints (see \cite{lucia}). Furthermore, we would like to relate our approach to the symplectic one in \cite{ghosh2024hamiltonianthermodynamicssymplecticmanifolds}.

\item To delve into the geometry of almost cosymplectic manifolds, studying the analogues of (co)isotropic and Lagrangian submanifolds for these structures. An almost cosymplectic structure induces a bivector field, which in turn defines a bracket on the space of smooth functions. This bracket is not Poisson, except in trivial cases, since it does not satisfy the Jacobi identity. We shall study the properties of this bracket.

\end{itemize}


\section*{Acknowledgements}
The authors wish to express their gratitude to the anonymous referees for their constructive comments. 
M.~de León and J.~Bajo received financial support from Grant PID2022-137909NB-C21 funded by MICIU/AEI/ 10.13039/501100011033 and Grant CEX2023-001347-S funded by MICIU/AEI/10.13039/501100011033. J.~Bajo received the support of a fellowship from ``La Caixa'' Foundation (ID 100010434) with code LCF/BQ/PFA25/11000023.

\section*{Data availability}
The Matlab codes employed for the simulations are available upon reasonable request.

\section*{Conﬂict of Interest}
The authors declare no conﬂict of interest.

\let\emph\oldemph


\addcontentsline{toc}{section}{References}
\printbibliography

\end{document}